\def\footnoterule{\kern-3\p@
  \hrule \@width 2in \kern 2.6\p@} 
\newtheorem{proposition}{Proposition}
\newtheorem{lemma}{Lemma}
\newcolumntype{P}[1]{>{\centering\arraybackslash}p{#1}}
\newcolumntype{M}[1]{>{\centering\arraybackslash}m{#1}}
\DeclareMathAlphabet{\mathpzc}{OT1}{pzc}{m}{it}
\def\be{ \begin{equation} }
\def\ee{ \end{equation} }
\def\bea{ \begin{eqnarray} }
\def\eea{ \end{eqnarray} }
\def\b0{{\bf 0}}
\newcommand*\dashline{\rotatebox[origin=c]{90}{$\dabar@\dabar@\dabar@$}}
\newcommand*{\rom}[1]{\expandafter\@slowromancap\romannumeral #1@}
\begin{document}
\title{RIS-Assisted Coverage Enhancement\\in Millimeter-Wave Cellular Networks}
\author{Mahyar Nemati, Jihong Park, and Jinho Choi
\thanks{Copyright (c) 2015 IEEE. Personal use of this material is permitted. However, permission to use this material for any other purposes must be obtained from the IEEE by sending a request to pubs-permissions@ieee.org. }
\thanks{M. Nemati, J. Park, and J. Choi are with
the School of Information Technology, 
Deakin University, 
Geelong, VIC 3220, Australia (e-mail: nematim@deakin.edu.au, jihong.park@deakin.edu.au, jinho.choi@deakin.edu.au)}
}
\date{today}
\maketitle

\begin{abstract}
The use of millimeter-wave (mmWave) bandwidth is one key enabler to achieve the high data rates in the fifth-generation (5G) cellular systems. However, mmWave signals suffer from significant path loss due to high directivity and sensitivity to blockages, limiting its adoption within small-scale deployments. To enhance the coverage of mmWave communication in 5G and beyond, it is promising to deploy a large number of reconfigurable intelligent surfaces (RISs) that passively reflect mmWave signals towards desired directions. With this motivation, in this work we study the coverage of an RIS-assisted large-scale mmWave cellular network using stochastic geometry, and derive the peak reflection power expression of an RIS and the downlink signal-to-interference ratio (SIR) coverage expression in closed forms. These analytic results clarify the effectiveness of deploying RISs in the mmWave SIR coverage enhancement, while unveiling the major role of the density ratio between active base stations (BSs) and passive RISs. Furthermore, the results show that deploying passive reflectors is as effective as equipping BSs with more active antennas in the mmWave coverage enhancement. Simulation results confirm the tightness of the closed form expressions, corroborating our major findings based on the derived expressions.


\end{abstract}
{\IEEEkeywords Millimeter-wave (mmWave), reconfigurable intelligent surface (RIS), coverage, signal-to-interference ratio (SIR), stochastic geometry.}
\section{Introduction}
Millimeter-wave (mmWave) cellular networks are widely studied for the emerging fifth generation ($5$G) of mobile communication networks and beyond. The Asia-Pacific and Americas regions are supposed to
give rise to the greatest share of the total contribution of
mmWave communications to the gross domestic product (GDP), \$$212$ billion and \$$190$ billion,
respectively, over the period of $2020$ to $2034$ \cite{Ros}; with a compound annual growth rate of $31$\% in the volume of mobile data
traffic \cite{Eric1-1}.
These significant 
growths imply that within the next decades, mmWave cellular networks will have significantly drawn attention to deliver much higher data-rate and capacity
compared to current levels due
to the availability of wider bandwidths \cite{RAP1,AccessSurvey, OUT1}.

As a primary distinctive technical feature, mmWave band suffers from a higher path loss than sub-6 GHz band. As a result, the mmWave communication range is limited. Nevertheless, when the frequency increases, the wavelength decreases which results in antenna aperture reduction. Thanks to a short wavelength ($1-10$ mm), it is feasible to pack multiple antenna elements into limited space at mmWave transceivers \cite{Cov&rate}. With large antenna arrays, e.g., multiple-input multiple-output (MIMO), mmWave cellular
    systems can execute beamforming to provide an array gain that compensates the frequency dependent path loss and overcomes additional noise power \cite{Cov&rate,RAP2, RAP3}. 
However, the mmWave communication range is still restricted due to the mmWave propagation characteristics, e.g., scattering, diffraction, and penetration loss \cite{RAP2,RAP3}. For instance, communications in mmWave frequencies highly suffer from penetration losses resulting in a blockage effect which mainly affects the line-of-sight (LoS) path and non-LoS (NLoS)
path loss characteristics \cite{Cov&rate}. 

Wireless transmission through multiple identified paths utilizing active MIMO relaying has been proposed as a potential solution that can reduce the blockage effect and increase the diversity \cite{YangRel1, KAVANPHD}. However, in \cite{RISNTO,RISBasar1,RISRui1}, it is shown that full-duplex MIMO relaying has a number of drawbacks such as signal processing complexity, noise enhancement, power consumption and self-interference cancellations at the relay stations along with their high implementation costs. To this end, it would be desirable to control the propagation environment in those frequencies with simple low-cost full-duplex passive reflectors like what has been recently proposed as in \textbf{reconfigure intelligent surfaces (RISs)} \cite{RISBasar1} to mitigate the aforementioned drawbacks.

An RIS is a software-defined metasurface containing a large number of passive reflectors and has given rise to the emerging \enquote{smart radio environments} concept \cite{RISBasar1}. The recent advent of RISs
in wireless communications enables network
operators to control the reflection
characteristics of the radio waves in an energy efficient way~\cite{park2020extreme}. The passive reflectors in an RIS are intelligently controlled by a main integrated circuit (IC) to adjust phase-shift of an impinging signal. 
In other words, RIS can turn the
wireless environment, which is highly probabilistic in nature,
into a controllable and partially deterministic phenomenon \cite{RISNTO,wcnc}. 
%


In the literature there is a significant effort to model mmWave cellular networks under different circumstances using stochastic geometry \cite{Cov&rate,ref2,ref3,ref4}. 
As early works, general stochastic geometry frameworks of mmWave cellular network were proposed in \cite{Cov&rate,ref2} to model the static objects and corresponding blockage probability using the concept
of random shape theory. Moreover, the impact of relay on a multi-hop medium access control protocol for 60 GHz frequency was investigated in \cite{ref3} when the LoS path is blocked. In \cite{ref4,KAVANPHD} comprehensive coverage performance analysis of relay-assisted mmWave cellular networks were investigated.
However, the aforementioned studies did not consider the spatial randomness of RISs deployments. In addition, the study of impact on RIS deployment in mmWave cellular networks is limited. 
In \cite{GongSurvey,GenFadRIS, liu2020, wu2020}, general comprehensive overviews characterizing the performance of RIS-assisted communications affecting the propagation environments were provided. 
Moreover, in \cite{noma1,noma2,noma3}, the impact of RIS deployments for non-orthogonal
multiple access (NOMA) networks were assessed.
In \cite{R1}, the effect of large-scale deployment of RISs on the performance of
cellular networks was studied by modeling the blockages using the line Boolean model.
In \cite{Mimo-1}, an RIS-assisted MIMO framework was proposed to randomly serve users by jointly passive
beamforming weight at the RISs and detection weight vectors at
the users.
In \cite{R2}, an analytical probability framework of successful reflection of RIS for a given transmission was provided using point processes, stochastic geometry, and
random spatial processes.
In \cite{R3}, authors proposed a distributed RIS-empowered
communication network architecture, where multiple source destination pairs communicate through multiple distributed RISs.
In \cite{AsymSINR}, an optimal linear precoder along with an RIS deployment in a single cell for multiple users is used to improve the coverage performance of the communications.
%


In this paper, we aim at studying a general tractable framework for the coverage performance of the RIS-assisted mmWave cellular networks with a major focus on RIS and BS densities. We use stochastic geometry as a powerful tool to study the average signal-to-interference-ratio (SIR) behavior over many randomly distributed BSs, RISs, and users in a 2-dimensional (2D) space. In our proposed model, BSs are equipped with a steerable antenna array and are able to send two beams towards a user equipment (UE). 
One beam is transmitted directly towards the UE, i.e., referred to as path \textbf{A}; and the other beam is sent towards the RIS and then reflected to the UE, i.e., referred to as path \textbf{B}. %
The main contributions of this paper are listed as follows.
\begin{itemize}
    \item We propose a general tractable RIS-assisted approach for SIR coverage performance in mmWave cellular networks for the first time where the message is sent by the BS towards the UE through two different paths. We use a diversity technique in which the system profits from the maximum received SIR at the UE through either path \textbf{A} or path \textbf{B}.
    \item Since the reflected power of passive RIS-reflectors is largely affected by the distance between the active BS and the RIS due to large-scale fading, we provide the probability distribution function (PDF) of this distance as an important quantity and discuss its dependency on the RIS and BS densities. 
    \item Discrete time delay values corresponding to quantized phase-shifts at each RIS-reflector is elaborated for passive beamforming at the RIS towards the UE. In addition, the peak reflection power at the RIS is assessed. It is shown that the average peak reflection power at the RIS decreases when the active BS density decreases. However, the reflected power reduction can be compensated by employing RISs of a large number of passive reflectors.
    \item  A closed-form approximation, referred to as \textbf{Approximation-\rom{1}}, along with a lower bound approximation, referred to as \textbf{Approximation-\rom{2}}, is derived for the SIR coverage probability of the signal received by the UE from path \textbf{B}, i.e., RIS-assisted path. 
    \item  Finally, we show that the RIS-assisted model provides a great deal of flexibility to obtain a desired SIR gain. Furthermore, we show that when the active BS density decreases, the co-channel interference caused by active interferer BSs decreases faster than the reflected power from the RIS. As a result, the decrease of active BS density improves the SIR coverage probability in our RIS-assisted model.
\end{itemize}

The rest of the paper is organized as follows. In Section \rom{2}, we present the system model of a baseline and RIS-assisted mmWave cellular networks. Then, the principles of RIS-assisted model are discussed in Section \rom{3}. Subsequently, the SIR coverage analysis of the RIS-assisted model is provided in Section \rom{4}. A comprehensive discussion on both of the baseline and RIS-assisted models is given in Section \rom{5}. Simulation results and comparisons are presented in Section \rom{6}. Finally, Section \rom{7} concludes the paper.

\section{System Model}
In this section, we first present a baseline downlink mmWave cellular network, followed by introducing the RIS-assisted downlink mmWave network. The baseline network model follows the standard frameworks for stochastic geometric mmWave system analysis \cite{REF,ParkSG}, but for the reader's convenience, we briefly describe the basics.

%
 We provide a set of common suppositions used for both baseline and RIS-assisted models as follows. In a network, BSs are randomly located in a 2D space according to a homogeneous Poisson point process (PPP), denoted by $\Phi_{BS}$, with an intensity of $\lambda_{BS}$. The UEs are distributed independently in the area, and each UE communicates with the nearest BS to enjoy the least mean propagation loss.
The probability density function (PDF) of the distance between the UE and the nearest BS, denoted by $r_0$, is obtained from the void probability in Poisson process of $\mathbb{R}^2$ as follows \cite{PPP}:
\begin{equation}
    f_{r_0}(r_0)=2\pi \lambda_{BS} r_0 e^{-\lambda_{BS}\pi r_{0}^2}.
    \label{pdfR0}
\end{equation}
For the mathematical amenability, blockage effects are omitted in this study. Since reflections enable to overcome blockages as observed in \cite{R1}, we envisage that RISs will further enhance the network coverage under blockages, and investigating this is deferred to our future work.
%
%
%
 We assume that all the BSs are equipped with $N$ isotropic active elements for beamforming towards the targets while the UEs are equipped with single omnidirectional antenna. 
 Furthermore, we consider the following assumption for the small-scale fading channel gain.
 
{\it Assumption 1 (Small-Scale Channel Gain):} The small-scale fading gain is assumed to follow an exponential distribution with mean of $1/\mu$. In the past, this assumption was common in stochastic geometric coverage analysis for mathematical tractability~\cite{Cov&rate,parkSG1,ChanelSG2,parkSG2}. Recent works~\cite{ParkSG,limani2020mmwave} revisited this exponential fading assumption, and rediscovered its feasibility even under realistic large-scale mmWave systems, by simply tuning $\mu$ according to the mean channel characteristics and antenna patterns.

\subsection{Baseline mmWave Cellular Network}
\begin{figure}[t]
                \centering
                    \includegraphics[width=6.35cm, height=3.8cm]{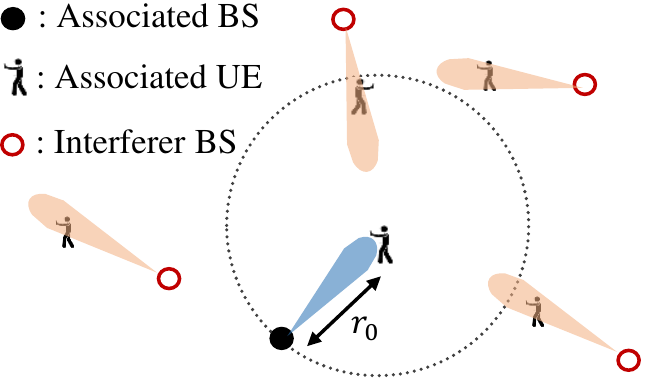}
                    \caption{Depiction of the baseline mmWave cellular network.}
                \label{fig:Conven-1}
\end{figure}
Fig. \ref{fig:Conven-1} depicts the baseline mmWave cellular network where each BS creates one beam to send the downlink signal towards the desired UE. The transmit signal power at the active BSs is assumed to be constant and is denoted by $P_s$.
 Compared to the sub-6 GHz cellular networks in \cite{REF}, in the mmWave cellular networks, beamforming is used to converge the signal power in a specific direction towards a desired UE due to the mmWave propagation characteristics \cite{parkBW}.
  The existing coverage analysis in \cite{REF} evaluated the conventional cellular networks for sub-6 GHz frequency bands. We exploit their analysis and modify it to introduce our baseline mmWave cellular network while taking beamforming into account. In general, as shown in Fig. \ref{fig:Conven-1}, there are two types of signal power sources. One type is the desired signal power received by the UE from the associated BS as the desired source, and the other type is the interference signal power received by the UE from the interferer BSs. 
 Based on the uniform planar square array (UPA) in 2D space \cite{parkBW}, the active BSs with $N$ isotropic elements are able to create single beam with beamwidth:
 \begin{equation}
     \psi_\textbf{o}=\frac{2\pi}{\sqrt{N}}.
     \label{BW1}
 \end{equation}
 Let each interferer BS transmit with its main-lobe pointed at a random direction. Nevertheless, as an additional gain, it also reduces co-channel interference because the signal from any NLOS interferer is highly attenuated \cite{kavan}. Intuitively, it affects the density of interferer BSs as a modified homogeneous PPP, denoted by $\Phi_I$, with an intensity of $\lambda_I$ as follows:
  \begin{equation}
      \lambda_I=\frac{\lambda_{BS}}{\sqrt{N}}.
      \label{lam_I}
  \end{equation}
 In other words, only a subset of interferer BSs in which their beamforming direction covers the desired UE are considered to be effective interferer BSs for the UE. 
 
  Let $\Gamma_\textbf{o}$ denote the SIR at an independent UE in the baseline mmWave cellular network. In this paper, we omit the noise power in SINR and evaluate the SIR-based performance for simplicity. Consequently, the SIR can be obtained as
\begin{equation}
    {\Gamma}_\textbf{o}=\frac{P_s g_0 r_0^{-\alpha}}{\sum\limits_{\substack{BS_i \in \Phi_{I},\\ i\neq 0}} P_s g_i r_i^{-\alpha}}=\frac{g_0 r_0^{-\alpha}}{\sum\limits_{\substack{BS_i \in \Phi_{I},\\ i\neq 0}}  g_i r_i^{-\alpha}},
    \label{S-A}
\end{equation}
where $g_i$ and $r_i$ are small-scale channel gain and the distance between the $i$th BS, denoted by $BS_i$, and the UE, respectively. Here, $i=0$ indicates the nearest BS which is the associated BS and $\alpha$ is the path-loss exponent of large-scale fading. From \textit{Assumption 1}, we have $g_i\sim \exp (\mu)$ for all $i$. 

 For the SIR coverage probability which is the probability that the received SIR is larger than a threshold, let $T$ denote the threshold. Then, from \eqref{S-A}, the SIR coverage probability is given by 
     \begin{align}
    \Pr\left[\Gamma_\textbf{o}>T\right]=\mathbb{E}\left\{\Pr\left(\frac{ g_0 r_0^{-\alpha} }{\sum\limits_{BS_i \in \Phi_{I}}  g_i r_i^{-\alpha}}>T\right)\right\}.
    \label{eq213}
\end{align}
In other words, it is equivalent with the complementary cumulative distribution function (CCDF) of SIR. Eventually, after some analysis given in Appendix \ref{AP0}, 
we have
\begin{align}
    \Pr\left[\Gamma_\textbf{o}>T\right]&=
         \frac{1}{1+\frac{1}{\sqrt{N}}T^{\frac{2}{\alpha}}
         \int_{T^{-\frac{2}{\alpha}}}^\infty \frac{1}{1+u^{\frac{\alpha}{2}}}du}
         \label{q2}
\end{align}
It is noteworthy that the final coverage probability expression is independent of the BSs' transmit power and density. It only depends on the beamwidth of the beams (i.e., $N$), $T$, and $\alpha$.
 
\begin{figure}[t]
                \centering
                    \includegraphics[width=6.35cm, height=3.8cm]{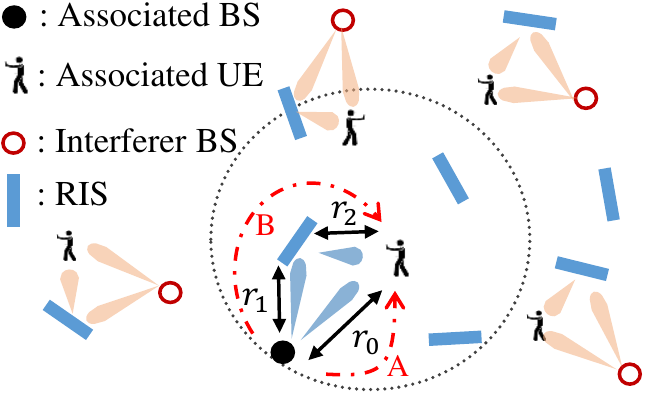}
                    \caption{Depiction of the RIS-Assisted Mm-Wave Cellular Network.}
                \label{system}
\end{figure}
 \subsection{RIS-Assisted mmWave Cellular Network}
 Suppose that there are buildings equipped with RISs in a mmWave cellular network which are distributed based on a homogeneous PPP, denoted by $\Phi_{RIS}$, with an intensity of $\lambda_{RIS}$. Each RIS has two parts: 1) passive part containing $M$ passive reflectors, and 2) a simple active part acting as a phase-shift controller.
 The UE communicates with the nearest RIS along with the nearest associated BS. 
Fig. \ref{system} shows the associated BS and the UE in coexistence of RISs and other interferer BSs. 
The distances between the associated BS and the associated RIS and the RIS and the UE are denoted by $r_{1}$ and $r_{2}$, respectively.
The PDF of $r_2$ can be obtained from the void probability in Poisson process of $\mathbb{R}^2$ as 
\begin{equation}
    f_{r_{2}}(r_{2})=2\pi \lambda_{RIS} r_{2} e^{-\lambda_{RIS}\,\pi r_{2}^2}.
    \label{pdfr2}
\end{equation}
Different from the baseline model, here when there is an RIS closer to the UE than the BS, i.e., $r_2<r_0$, the associated BS divides its single beam into two similar beams. The first beam is transmitted directly towards the desired UE, and the second beam targets the nearest RIS to the UE as shown in Fig. \ref{system}. As a result, the beamwidth of each of these two beams changes, from \eqref{BW1} in the baseline model, into 
\begin{equation}
    \psi_\textbf{s}=\frac{2\sqrt{2}\pi}{\sqrt{N}},
    \label{BW2}
\end{equation}
in 2D space and the transmit power of each beam at the active BSs becomes $\frac{P_s}{2}$. Nevertheless, this happens only when $r_2<r_0$. With the analysis given in Appendix \ref{AP00}, the probability of having an RIS within the distance between the associated BS and the UE becomes
\begin{equation}
f_{r_2}\left(r_2|r_2<r_0\right)=2\pi(\lambda_{RIS}+\lambda_{BS}) r_2 e^{-\pi(\lambda_{RIS}+\lambda_{BS})r_2^2}.
    \label{lem1}
\end{equation}
Throughout the paper, we consider $\lambda_{RIS}\gg\lambda_{BS}$, as shown in Fig. \ref{mesh}. Since RISs are passive, they are easier and cheaper to be implemented than the active BSs. Therefore, the expression in \eqref{lem1} approximately becomes equivalent to \eqref{pdfr2}, i.e.,
\begin{figure}[t]
\centering
\subfloat[Baseline model]{
 \includegraphics[width=4.4cm, height=3.8cm]{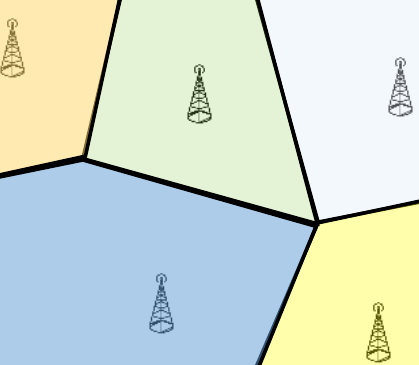}%
 }
 \subfloat[RIS-assisted model.]{
  \includegraphics[width=4.4cm, height=3.8cm]{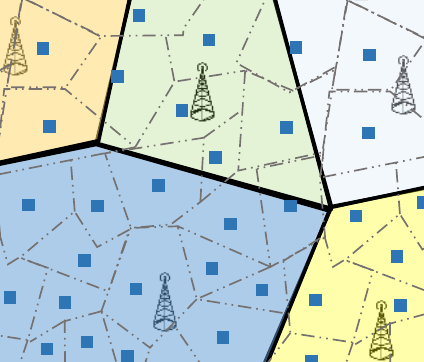}%
  }
 \captionsetup{width=1\linewidth}
 \caption{ Comparison between the baseline and RIS-assisted models when $\lambda_{RIS}\gg\lambda_{BS}$. Squares indicate RISs. In RIS-assisted model, each RIS has its own coverage area since the UE communicates with the nearest RIS.} 
 \label{mesh}
\end{figure}
\begin{equation}
    \lambda_{RIS}\gg\lambda_{BS} \rightarrow f_{r_2}\left(r_2|r_2<r_0\right)\approx f_{r_2}\left(r_2\right).
    \label{con1}
\end{equation}

  Thus, in general, there are two different paths from the BS towards the UE as shown in Fig. \ref{system}. One path which is directly from the BS to the UE, i.e., referred to as path \textbf{A}; and the other path which goes through the RIS and then reflected towards the UE, i.e., referred to as path \textbf{B}.
Let assume each RIS serves one UE at a time\footnote{One RIS can serve multiple UEs assuming the size of the RIS is determined based on the density of UEs around it. Hence, in high density UE areas, a larger RIS is needed where a portion of its reflectors can be dedicated to each UE. However, this can be considered as another resource allocation problem which its further assessment is out of the scope of this paper. Therefore, without loss of generality, we assume that each RIS serves one UE at a time.}.
The phase-shift controller at the RIS can adjust the phase-shifts and generate a new beam towards the UE (as described in section \ref{SBF}). In fact, the RIS acts as a passive beam-former by adjusting the phase-shifts at the passive reflectors and converge a beam in a specific direction.
 Thus, we define two states for RISs with respect to their reflection directions as shown in Fig. \ref{empty}. 
 \begin{itemize}
    \item Engaged: The nearest RIS to the UE which is engaged for the communication assistant in path \textbf{B}. 
    \item Idle: All other RISs which are not engaged in any communication are considered idle RISs. This is the default state when the phase-controller in an idle RIS adjusts the phase-shifts at the reflectors somehow to generate a beam towards an empty space, e.g., sky, to avoid interference with the UEs. Intuitively, it is evident that the idle RISs do not contribute to an interference.
\end{itemize}
 %
\begin{figure}[t]
                \centering
                    \includegraphics[width=7cm, height=4.5cm]{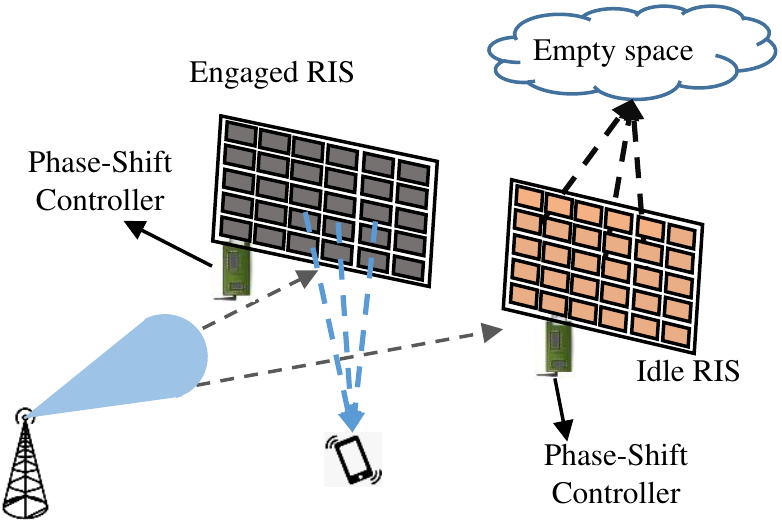}
                    \caption{The idle RIS reflects the signals toward the empty spaces, e.g. Sky.
}
                \label{empty}
\end{figure}

 In a nutshell, the anatomy of the communication initiation is briefly explained as follows.
\begin{itemize}
    \item The BS broadcasts the UE identification number as a paging message.
    \item Mobile receives the paging message and identifies itself along with its location to the BS and nearby RISs and a successful handshake between the BS and the UE takes place\footnote{It is noteworthy that angle of arrival of the UE's response can be simply obtained by both the BS and RIS with passive localization methods \cite{puo}.}. 
    \item Subsequently, the BS which knows the location of the UE and fixed RISs around it, informs the UE and its nearest RIS which forward channel the UE has been assigned. 
    \item The associated RIS becomes an engaged RIS and its phase-controller adjusts the phase-shifts at the reflectors.
\end{itemize}
\textit{Assumption 2 : } 
An engaged RIS is able to create a beam with highly narrow beamwidth toward the UE since it contains a large number of reflectors. Moreover, the reflection steering angle of each RIS is only limited to $[0, \quad \pi]$.
 Thus, it may receive signal only from a half of the interferer BSs\footnote{Besides, since the RISs are passive and their reflected power significantly being affected by large-scale path-loss, they can only cause an interference for nearby UEs. 
 }.
 Intuitively, it significantly reduces the probability that the engaged RIS contributes to an effective interference. Therefore, throughout the paper, we neglect the interference that may be caused by engaged RIS reflections\footnote{Recently, authors in \cite{RuiRISNet} showed that the interference marginally increases by RISs deployments. However, in order to maximize the mathematical tractability, we ignore this limited interference.}.

In the following section, we explain the principles of the RIS-assisted mmWave cellular networks in more details.
\section{Principles of RIS-Assisted MmWave Cellular Network}
In this section, first, we explain the phase-shift adjustment at each passive RIS-reflector and how passive beamforming is done by RISs. Second, a distribution function for the distance between the associated BS and the engaged RIS is provided. Finally, the peak reflection power at the RIS is assessed.

\subsection{Phase-Shift Adjustment at RIS and Passive Beamforming}\label{SBF}
               \begin{figure}[t]
                \centering
                    \includegraphics[width=6cm, height=3.5cm]{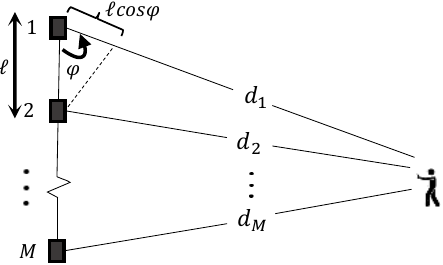}
                    \caption{Depiction of $M$-passive-elements linear phased-array at the RIS.}
                \label{fig:BF}
    \end{figure}
Seeing that the RIS includes $M$ passive reflective elements, it does not generate any transmit power by its own. 
The phase controller of the RIS determines the phase-shift for each reflected signal to create a new beam towards the UE. 
Let assume that it is only able to adjust discrete phase-shifts for the impinging signal at each passive reflector due to the implementation constraints. 
In particular, in order to show the effect of $M$ passive reflectors of the RIS on the transmit power, and also the phase-shift at each reflector, let simplify the generic 2D phased-array by applying it into a 1D linear phased-array \cite{BEAMBOOK} as shown in Fig. \ref{fig:BF}; 
where the distance between the passive reflectors is denoted by $\ell$ and distance of the $m$th passive reflector to the UE is denoted by $d_m$, where $m=1,\cdots , M$.
    Let $s(t+\delta_m)$ denote the message impinging the $m$th passive reflector at the RIS where $\delta_m$ stands for the phase difference of the impinging message at the $m$th reflector. Then, the reflected signal towards the UE which is a superposition due to $M$ passive reflectors is given by
    \begin{align}
        x(t)&=\sum_{m=1}^{M} s\left(t+\delta_m-\frac{d_m}{c}-\tau_m\right)\nonumber\\
        &=\sum_{m=1}^{M} s\left(t+\delta_m-\frac{d_1}{c}+\frac{m\ell\cos \varphi}{c}-\tau_m\right),
        \label{trx}
    \end{align}
    where $c$, $\varphi$ and $\tau_m$ correspond to the wave-speed, angle of the $1$th element reflection toward the UE (as shown in Fig. \ref{fig:BF}) and time delay of the $m$th reflector. Here, the time delay, $\tau_m$, is associated with the phase-shift adjustment at the $m$th reflector. In other words, $\tau_m$ is given by
    \begin{equation}
        \tau_m=\frac{m\ell\cos \varphi}{c}+\delta_m.
    \end{equation}
    Therefore, the reflected signal in \eqref{trx} becomes
        \begin{align}
        x(t)=\sum_{m=1}^{M} s\left(t-\frac{d_1}{c}\right)=Ms\left(t-\frac{d_1}{c}\right).
        \label{trx1}
    \end{align}
    
    Consequently, the RIS executes a new passive beamforming towards the UE, i.e., in an angle of $\varphi$, where the maximum power of the beam, i.e., peak effective radiated power, scales up by $M^2$.
       In addition, in order to put the phase-shift adjustment at each reflector into action, we may need to consider implementation constraints since $\frac{m\ell\cos \varphi}{c}+\delta_m$ might be continues values. However, we consider discrete values for $\tau_m$ that there will be a minimum phase-shift adjustment. We quantize the continues phase-shift amplitudes by discretizing them into the implementable values.

 \subsection{Distance Between the RIS and the Associated BS}
\begin{figure}[t]
                \centering
                    \includegraphics[width=4.6cm, height=4cm]{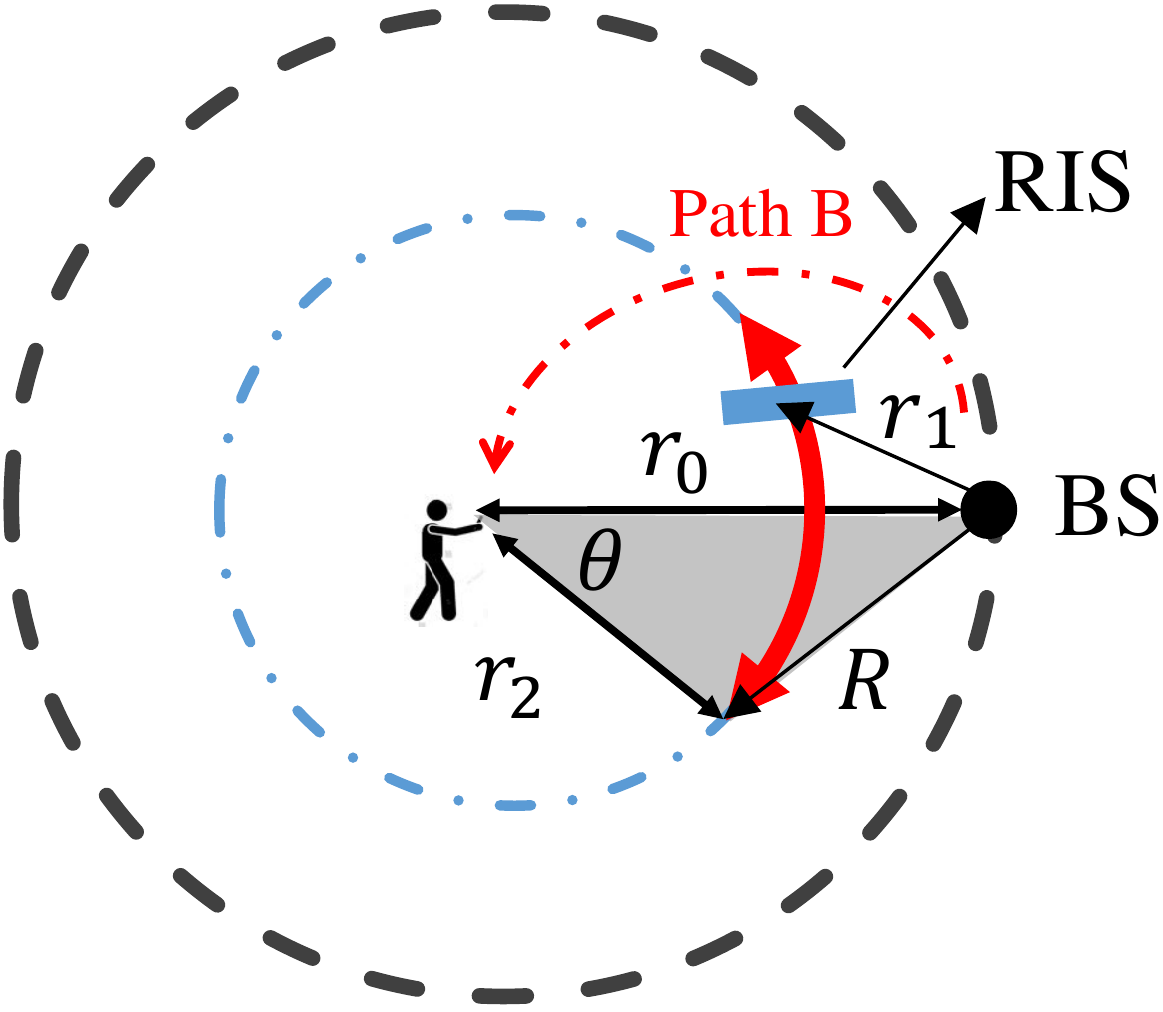}
                    \caption{The distance between the associated BS and the associated RIS is limited to $|r_0-r_2|<r_1<r_0+r_2$. The red captured annulus indicates where the potential RIS can be located when $r_1\leqslant R$ for a given $r_0$ and $r_2$.
}
                \label{fig:r_1}
\end{figure}
 An important quantity is the distance between the engaged RIS and the associated BS, denoted by $r_1$, which affects the transmit power at the RIS considering the large-scale fading. Although the BSs and RISs are both distributed based on two independent homogeneous PPPs, i.e., $\Phi_{BS}$ and $\Phi_{RIS}$, $r_1$ depends on both $r_0$ and $r_2$ as shown in Fig. \ref{fig:r_1}. Therefore, the distribution of $r_{1}$ for given $r_0$ and $r_2$ equals the captured annulus of the circumference of the circle with a radius of $r_{2}$ and origin of the UE as shown in Fig. \ref{fig:r_1}. Hence, the CDF of $r_{1}$ becomes the probability of the engaged RIS located on annulus of $2\theta r_{2}$ over the whole possible area of the circumference of $2\pi r_2$ as follows:
\begin{equation}
    \Pr\left(r_{1}<R|r_0,r_2\right)=F_{r_{1}}\left(R|r_0,r_2\right)=\frac{2\theta r_{2}}{2\pi r_{2}}=\frac{\theta}{\pi}.
    \label{er12}
\end{equation}
From the  law of Cosines in the shaded triangle\footnote{$a^2=b^2+c^2-2bc \cos (\hat{A})$}, we have
\begin{equation}
    \theta =\cos^{-1}\left(\frac{r_0^2+r_{2}^2-R^2}{2r_0r_{2}}\right).
    \label{tr34}
\end{equation}
 By substituting \eqref{tr34} in \eqref{er12} we have
\begin{align}
    &F_{r_{1}}\left(R|r_0,r_2\right)=\frac{\cos^{-1}\left(\frac{r_0^2+r_{2}^2-R^2}{2r_0r_{2}}\right)}{\pi}
\end{align}
Then, the PDF of $r_{1}$ for given $r_0$ and $r_2$ can be obtained as
\begin{align}
    f_{r_{1}}(r_{1}|r_{0},r_2)=\frac{d F_{r_{1}}(r_{1})}{d r_{1}}=\frac{r_{1}}{\pi r_0 r_{2}\sqrt{1-\left(\frac{r_0^2+r_{2}^2-r_{1}^2}{2r_0r_{2}}\right)^2}}.\label{PDFr1}
\end{align}

It is noteworthy that the $r_1$ can vary from $|r_0-r_2|$ to $r_0+r_2$. Eventually, from \eqref{pdfR0} and \eqref{pdfr2}, the PDF of the $r_1$ is given by 
\begin{align}
 &f_{r_{1}}(r_{1}) 
    = \int_{r_0=0}^\infty \int_{r_{2}=0}^{\infty} 
    f_{r_{1}}(r_{1}|r_{2},r_0) f(r_{2}) f(r_{0}) \,\,
    \, dr_{2}\,\, dr_0,\nonumber\\
    &\text{s.t.   } |r_0-r_{2}| \leqslant r_1 \leqslant r_0+r_{2}. 
    \label{pdr10}
\end{align}

\begin{figure}[t]
                \centering
                    \includegraphics[width=7cm, height=6cm]{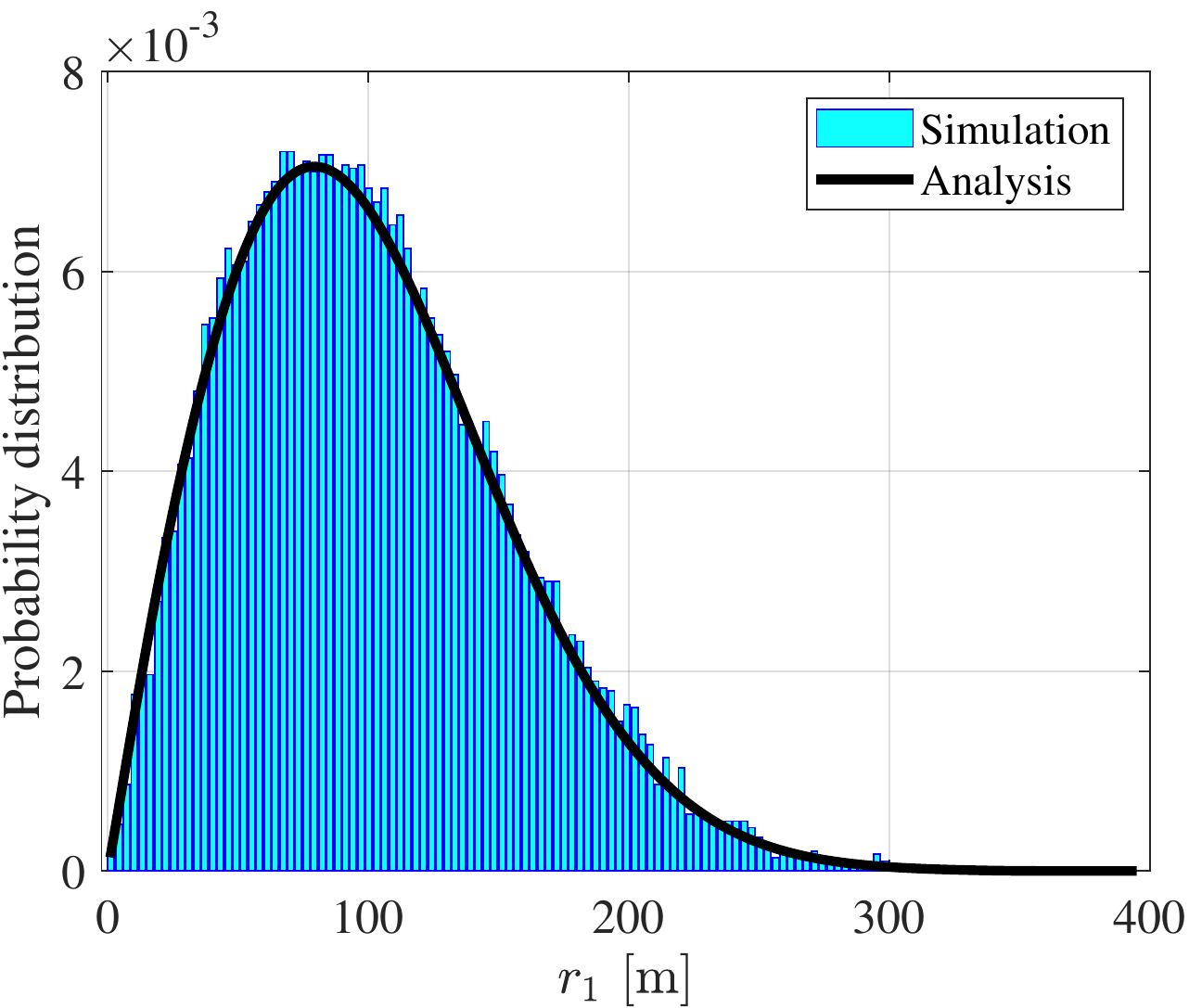}
                    \caption{Comparison of the analytical and simulation results for the PDF of $r_1$. $\lambda_{RIS}=1000 \frac{RIS}{km^2}$, $\lambda_{BS}=25 \frac{BS}{km^2}$. 
}
                \label{fig:r1sim}
\end{figure}
Fig. \ref{fig:r1sim} shows the theoretical PDF of the $r_1$ which coincides with the simulation results. 
In addition, in order to see the impact of $\lambda_{RIS}$ and $\lambda_{BS}$ on the $r_1$, the expected value of $r_1$ in \eqref{pdr10} with putting its condition into the integral becomes
\begin{align}
    \mathbb{E}\{r_{1}\}=\int_{r_0=0}^\infty \int_{r_{2}=0}^{\infty} &\int_{r_{1}=|r_0-r_{2}|}^{r_0+r_{2}}  r_{1} \times \nonumber\\
   & 
    f(r_{1}|r_{2},r_0) f(r_{2}) f(r_{0}) \,\,
    dr_{1}\,\, dr_{2}\,\, dr_0 \nonumber .
\end{align}
Thus, we have
\begin{align}
    \mathbb{E}\{r_{1}\}=&4 \pi \lambda_{BS}\lambda_{RIS}\int_{r_0=0}^\infty \int_{r_{2}=0}^{\infty} \int_{r_{1}=|r_0-r_{2}|}^{r_0+r_{2}} r_1^2 \,\, \times \nonumber \\
    &\frac{\exp \left[-\pi\left( \lambda_{RIS} r_2^2+\lambda_{BS} r_0^2\right)\right]}
    {\sqrt{1-\left(\frac{r_0^2+r_2^2-r_1^2}{2r_0r_2}\right)^2}}dr_{1}\,\, dr_{2}\,\, dr_0.
    \label{pro}
\end{align}
This integral is numerically calculated using MATLAB as a function of $\mathcal{F}_{r_1}\left(\lambda_{RIS},\lambda_{BS}\right)\triangleq \mathbb{E}\{r_1\}$ which Fig. \ref{r1Surf} shows its changes when $\lambda_{RIS}$ and $\lambda_{BS}$ vary. In general, an increase in either of $\lambda_{RIS}$ or $\lambda_{BS}$ reduces $\mathbb{E}\{r_1\}$.
 For example, for a fixed $\lambda_{BS}$, if $\lambda_{RIS_{a}}<\lambda_{RIS_{b}}$, then $\mathcal{F}_{r_1}\left(\lambda_{RIS_a},\lambda_{BS}\right)>\mathcal{F}_{r_1}\left(\lambda_{RIS_b},\lambda_{BS}\right)$. 
 However, the speed of changes in $\mathcal{F}_{r_1}$ decreases when $\lambda_{RIS}$ increases. Therefore, we can conclude that from law of large numbers (LLN) theorem, if $\lambda_{RIS_{a}}<\lambda_{RIS_{b}}$, $\lambda_{BS_{a}}<\lambda_{BS_{b}}$ and $\lambda_{BS_{a}},\lambda_{BS_{b}},\lambda_{RIS_{a}},\lambda_{RIS_{b}}\rightarrow \infty$, then $\mathcal{F}_{r_1}\left(\lambda_{RIS_a},\lambda_{BS_b}\right)\approx \mathcal{F}_{r_1}\left(\lambda_{RIS_b},\lambda_{BS_b}\right)$ in large-scale communications.

\subsection{Peak Reflection Signal Power of RIS}
The transmitted signal from the associated BS towards the engaged RIS experiences a small-scale fading gain, denoted by $f_m$, while reaching at the $m$th RIS-reflector, i.e., from \textit{Assumption 1}, we assume $f_m\sim \exp (\mu)$.
Thus, the impinging signal power at the $m$th passive RIS-reflector is given by
\begin{equation}
    \mathcal{P}_{m}=\frac{P_s}{2}f_m r_{1}^{-\alpha}.
\end{equation}
Since the RIS-reflectors are located in a relatively short distance from each other, we can assume $f_m$ is likely correlated and for a given $r_1$, we can conclude that 
\begin{equation}
   \mathcal{P}_m\approx \mathcal{P}_1=\frac{P_s}{2}f_1 r_1^{-\alpha}\quad \text{for all} \quad m.\label{pm}
\end{equation}

\begin{figure}[t]
                \centering
                    \includegraphics[width=9cm, height=6cm]{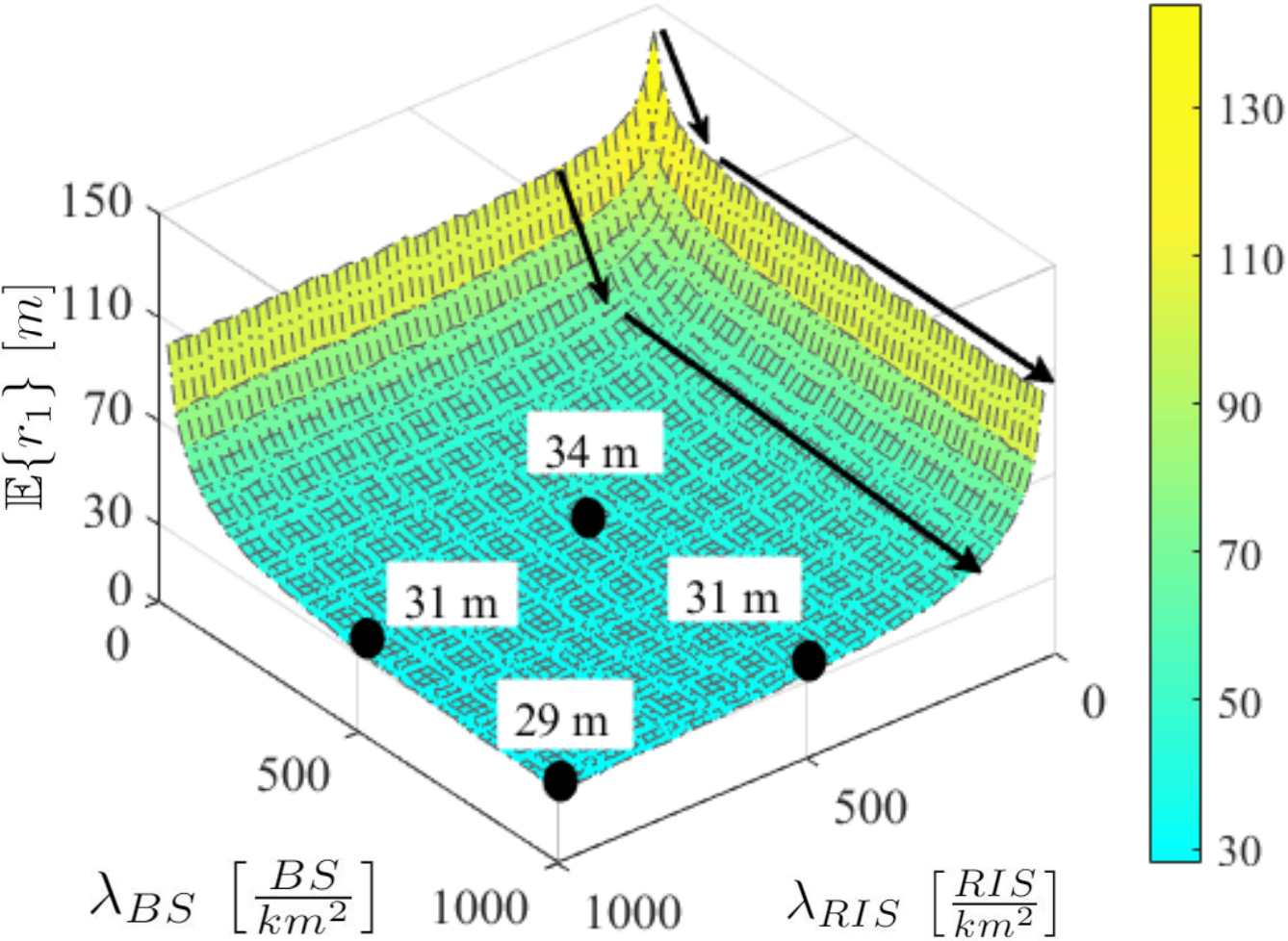}
                    \caption{Impact of $\lambda_{RIS}$ and $\lambda_{BS}$ on $\mathbb{E}\{r_1\}$. 
}
                \label{r1Surf}
\end{figure}
\begin{figure}[t]
                \centering
                    \includegraphics[width=9cm, height=6cm]{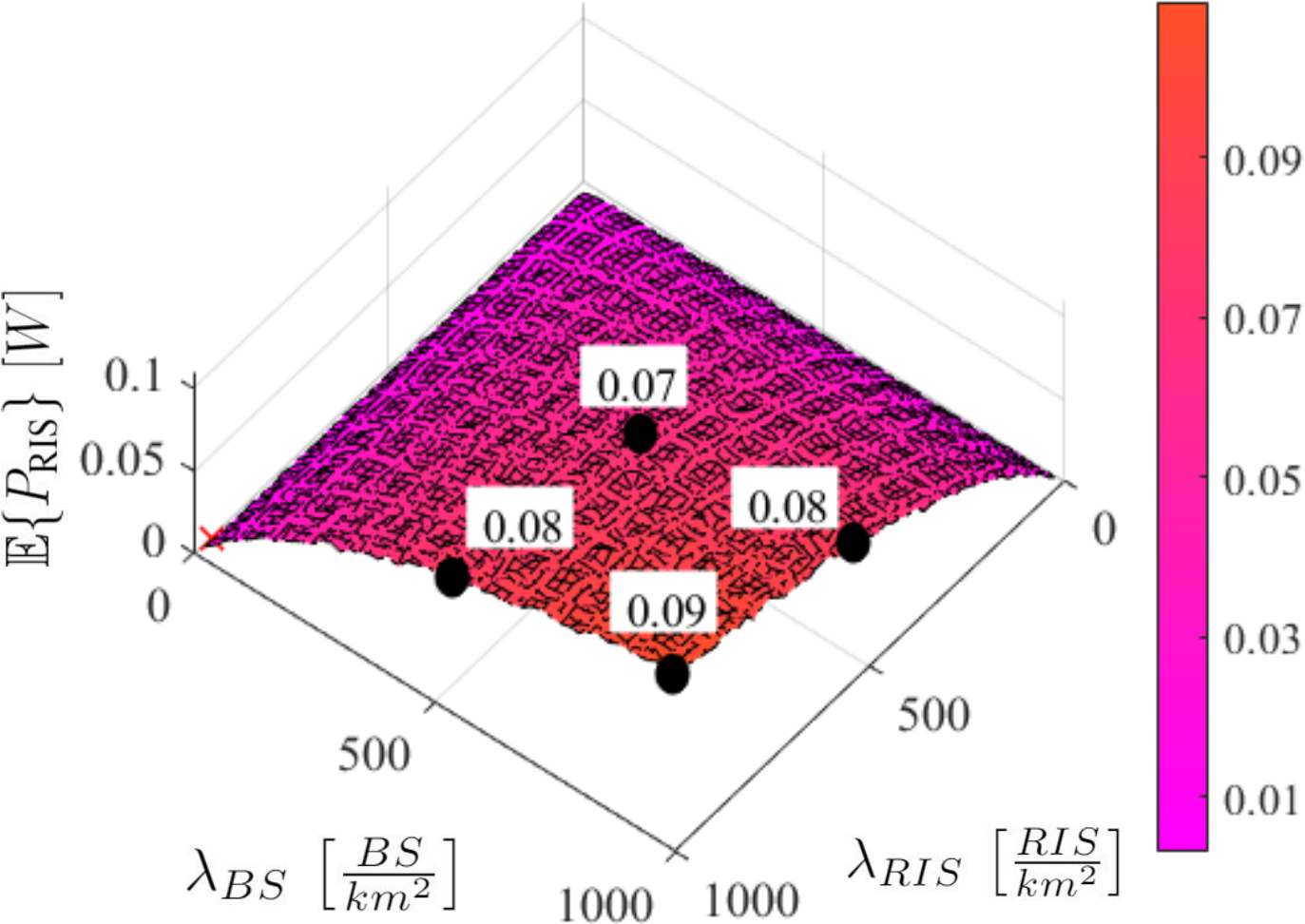}
                    \caption{Impact of $\lambda_{RIS}$ and $\lambda_{BS}$ on average reflected power, i.e., $\mathcal{F}_{P}\left(\lambda_{RIS},\lambda_{BS}\right)\triangleq \mathbb{E}\{P_\text{RIS}\}$, $M=100$, $\beta=1$, $\mu=1$, $\alpha=4$, and $P_s=2$ W.
}
                \label{power}
\end{figure}
Let us assume an attenuation power factor for RIS-reflectors which is denoted by $\beta \in (0, \quad 1]$. Here, $\beta$ is constant and can be obtained by measuring the attenuation power of a signal passing through the RIS-reflectors. Furthermore, since the peak effective radiated power scales up by $M^2$ as given in \eqref{trx1}, the peak reflected power from the engaged RIS towards the UE becomes 
\begin{equation}
    P_{\text{RIS}}=M^2  \beta \frac{P_s}{2} f_1 \,r_{1}^{-\alpha}.\label{pp1}
\end{equation}

Consequently the reflected power largely affected by $r_1^{-\alpha}$. With respect to \eqref{pro}, the average peak reflected power can be expressed as a function of $\lambda_{RIS}$ and $\lambda_{BS}$ for given $\alpha,\, \beta$ and $M$ as follows
\begin{align}
    \mathcal{F}_P\left(\lambda_{RIS},\lambda_{BS} \right)\triangleq \mathbb{E}\{P_\text{RIS}\}=
    \frac{M^2 \beta P_s}{2\mu}\mathbb{E}\{r_1^{-\alpha}\},\label{poavg}
\end{align}
 which is shown in Fig. \ref{power}. In other words, the dependency of average power on $\lambda_{RIS}$ and $\lambda_{BS}$ comes from $\tilde{\mathcal{F}}_{r_1}\left(\lambda_{RIS}, \lambda_{BS}\right)\triangleq \mathbb{E}\{r_1^{-\alpha}\}$ with respect to the PDF of $r_1$ in \eqref{pdr10} and \eqref{pro}.
 Thus, we conclude that in general, since $\alpha$ is positive, i.e., usually $\alpha>2$ \cite{RAP1}, the average reflected power from the RIS increases when $\lambda_{RIS}$ increases, i.e., $\mathbb{E}\{r_1\}$ decreases.
 For example, keeping $\lambda_{BS}$ fixed, if $\lambda_{RIS_{a}}<\lambda_{RIS_{b}}$, then $\mathcal{F}_{P}\left(\lambda_{RIS_a},\lambda_{BS}\right)<\mathcal{F}_{P}\left(\lambda_{RIS_b},\lambda_{BS}\right)$. 
Likewise, the reflected power from the RIS decreases when BS density decreases as shown in Fig. \ref{power}. However, this reflected power reduction can be compensated by employing larger RISs with more number of reflectors, i.e., a larger $M$.

In order to find the SIR coverage probability of the RIS-assisted path, i.e., path \textbf{B}, and simplify the analysis in section \ref{sec4}, we introduce a raw moment of $\left(\frac{P_{\text{RIS}}}{\mu}\right)^{\frac{2}{\alpha}}$ which is calculated as follows:
 \begin{align}
\mathbb{E}\left\{\left(\frac{P_{\text{RIS}}}{\mu}\right)^{\frac{2}{\alpha}}\right\}=\left[\frac{M^2\beta P_s}{2\mu}\right]^{\frac{2}{\alpha}}\mathbb{E}\left\{f_1^{\frac{2}{\alpha}} \right\}\mathbb{E}\left\{ r_1^{-2}\right\}.\label{EE}
\end{align}
 Since $f_1\sim \exp (\mu)$, we have
\begin{align}
    \mathbb{E}\left\{f_1^{\frac{2}{\alpha}} \right\}&=\int_{f_1=0}^{\infty} f_{1}^{\frac{2}{\alpha}} \mu e^{- \mu f_1}\, df_1=
    \mu^{-\frac{2}{\alpha}} \Gamma\left(\frac{2}{\alpha}+1\right),\label{f1}
\end{align}
where $\Gamma(x)=\int_0^\infty t^{x-1}e^-t \,dt$ is the Gamma function. Moreover, from \eqref{pdr10}, we define
\begin{align}
   \mathcal{F}_R\left(\lambda_{RIS},\lambda_{BS}\right)\triangleq \mathbb{E}\left\{ r_1^{-2}\right\}=\int_{r_{1}=\varepsilon}^\infty r_{1}^{-2}f_{r_{1}}(r_{1}) \,\, d r_{1}\label{r_12},
\end{align}
where $\varepsilon=\max \left(|r_0-r_2|,\epsilon\right)$ and $\epsilon$ is a minimum euclidean distance of the BS from the RIS that prevents the divergence of $\mathbb{E}\left\{r_{1}^{-2}\right\}$.
Eventually, by substituting \eqref{f1} and \eqref{r_12} in \eqref{EE}, we have
\begin{align}
\mathbb{E}\left\{\left(\frac{P_{\text{RIS}}}{\mu}\right)^{\frac{2}{\alpha}}\right\}
=
\left[\frac{M^2\beta P_s}{2\mu^2}\right]^{\frac{2}{\alpha}}
\Gamma\left(\frac{2}{\alpha}+1\right)
\mathcal{F}_R\left(\lambda_{RIS},\lambda_{BS}\right).
\label{EE1}
\end{align}


\textit{Remark 1:} The raw moment of $\left(\frac{P_{\text{RIS}}}{\mu}\right)^{\frac{2}{\alpha}}$ in \eqref{EE1} depends on $\mathcal{F}_R\left(\lambda_{RIS},\lambda_{BS}\right)$ for given $\alpha,\, M, \, \mu$ and $\beta$. Therefore, similar to the $\mathcal{F}_{P}\left(\lambda_{RIS},\lambda_{BS}\right)\triangleq \mathbb{E}\{P_\text{RIS}\}$ shown in Fig. \ref{power}, $\mathbb{E}\left\{\left(\frac{P_{\text{RIS}}}{\mu}\right)^{\frac{2}{\alpha}}\right\}$ is an increasing function of $\lambda_{RIS}$ and $\lambda_{BS}$.


%
 \section{SIR Coverage Probability in RIS-Assisted MmWave Cellular Network}
\label{sec4}
 In order to facilitate the UE implementation for the receptions through two paths \textbf{A} and \textbf{B} at the receiver, diversity is taken into account. By selection diversity of two received signals, the strongest signal is selected.
 Let $\Gamma_\textbf{A}$, and $\Gamma_\textbf{B}$ denote the SIRs of the received signals through paths \textbf{A} and \textbf{B}, respectively. Then, the SIR at the UE is given by
  \begin{equation}
      \Gamma_\textbf{s}=\max \{\Gamma_\textbf{A}, \Gamma_\textbf{B}\}.
      \label{gams}
 \end{equation}

 The SIR of the received signal from path \textbf{A} is similar to the baseline model while the beamwidth and transmit signal power at the active BS is being affected, i.e., $\psi_\textbf{s}=\sqrt{2}\psi_\textbf{o}$ and $\frac{P_s}{2}$ in 2D space. Consequently, the co-channel interference is affected as an additional gain of beamforming and the interferer BSs density 
 becomes
 \begin{equation}
     \lambda_{Is}=\sqrt{\frac{2}{N}}\lambda_{BS}.
     \label{lamI2}
 \end{equation}
 Therefore, taking these changes into account and similar to \eqref{q2}, the SIR coverage probability for path \textbf{A} can be obtained as
\begin{align}
    \Pr\left[\Gamma_\textbf{A}>T\right]=
    \frac{1}{\left(1+\sqrt{\frac{2}{N}}T^{\frac{2}{\alpha}}
         \int_{T^{-\frac{2}{\alpha}}}^\infty \frac{1}{1+u^{\frac{\alpha}{2}}}du\right)}
         \label{q23}
\end{align}     
 
 On the other hand, the SIR of the received signal through path \textbf{B} is given by
\begin{equation}
    \Gamma_\textbf{B}=\frac{P_{\text{RIS}} h r_2^{-\alpha}}{\sum\limits_{\substack{BS_i \in {\Phi}_{I},\\ i\neq 0}} \frac{P_s}{2} {g}_i r_i^{-\alpha}},
    \label{S-B}
\end{equation}
where $h$ denotes the small-scale fading gain between the engaged RIS and the UE. Based on \textit{Assumption 1} we have $h\sim \exp(\mu)$. In order to simplify the analysis to find a closed-form expression for $\Pr\left\{{ \Gamma_\textbf{B}}>T\right\}$, we utilize the following conversion.

\textit{Remark 2: (Power-Density Conversion)} The path-loss process of $r\in \Phi$ with transmit power $P$ and intensity $\lambda$ is equivalent with that of $\mathcal{R}\in \tilde{\Phi}$ with a transmit power of $1$ and an intensity of $\tilde{\lambda}$ given as
\begin{equation}
    \tilde{\lambda}=P^{\frac{2}{\alpha}}\lambda.
\end{equation}
The proof of this \textit{remark} is similar to that of \cite[lemma 1]{PLPL}. However, for the reader's convenience, a simplified version is provided in Appendix \ref{AP1}.
 
 Therefore based on \textit{remark 2}, we have $\Phi_{BS}\rightarrow \tilde{\Phi}_{BS}$, $\Phi_{I}\rightarrow \tilde{\Phi}_{I}$ and $\Phi_{RIS}\rightarrow \tilde{\Phi}_{RIS}$; and from \eqref{S-B}, the SIR coverage probability for path \textbf{B} is given by
 \begin{align}
    \Pr\left\{{ \Gamma_\textbf{B}}>T\right\}&=\mathbb{E}\left\{\Pr\left(\frac{\tilde{h} r_{2}^{-\alpha}}{\sum\limits_{\substack{BS_i \in \tilde{\Phi}_{I},\\ i\neq 0}} \tilde{g}_i r_i^{-\alpha}}>T\right)\right\},
    \label{S-B2}
    \end{align}
    where 
    \begin{align}
        &\tilde{\lambda}_{BS}= \left(\frac{P_s}{2\mu} \right)^{\frac{2}{\alpha}} \lambda_{BS} \label{uy1}\\
        &
        \tilde{\lambda}_I= \left ( \frac{P_s}{2\mu}\right)^{\frac{2}{\alpha}}\lambda_{Is}\\
        &
        \tilde{\lambda}_{RIS}=\mathbb{E}\left\{\left(\frac{P_{\text{RIS}}}{\mu}\right)^{\frac{2}{\alpha}}\right\}\lambda_{RIS},\label{we23}\\
        &
        \tilde{g}_i\sim \exp (1) \hspace{2mm}\text{   and   }\hspace{2mm}
        \tilde{h}\sim \exp (1) .\label{we234}
    \end{align}

By taking the changes in \eqref{uy1} to \eqref{we234} into account, the SIR coverage probability in \eqref{S-B2} becomes
\begin{align}
    \Pr\left\{{ \Gamma_\textbf{B}}>T\right\}&=
     \mathbb{E}\left\{\Pr\left(\tilde{h}>Tr_{2}^{\alpha}\sum\limits_{BS_i\in\tilde{\Phi}_{I}} \tilde{g}_i r_i^{-\alpha}\right)\right\}.\label{tre2}
\end{align}
Since $\tilde{h}\sim \exp (1)$ and $\Pr(\tilde{h}>x)=1-F_{\tilde{h}}(x)=e^{-x}$, \eqref{tre2} becomes
\begin{align}
 \Pr\left\{{\Gamma_\textbf{B}}>T\right\}&=\mathbb{E}\left\{\exp \left(-Tr_{2}^{\alpha}\sum\limits_{BS_i\in\tilde{\Phi}_{I}}\tilde{g}_ir_i^{-\alpha}\right)\right\}\nonumber\\
 &=
    \mathbb{E}\left\{\prod\limits_{BS_i\in\tilde{\Phi}_{I}}\exp \left(- T \tilde{g}_i \left[\frac{r_{2}}{r_i}\right]^{\alpha}\right)\right\}.\label{e1}
\end{align}
Since $\tilde{g}_i\sim \exp (1)$, we have $\mathbb{E}\left\{e^{-\tilde{g}_ix}\right\}=\left[1+x\right]^{-1}$ and \eqref{e1} becomes
\begin{align}
    \Pr\left\{{\Gamma_\textbf{B}}>T\right\}&= \mathbb{E}\left\{\prod\limits_{BS_i\in\tilde{\Phi}_{I}} \frac{1}{1+T  \left[\frac{r_{2}}{r_i}\right]^\alpha}\right\}.\label{e3}
\end{align}
With simplification by application of Campbell’s theorem \cite{parkBW,Champ,REF}\footnote{$\mathbb{E}\left\{\prod\limits_{\substack{x \in \tilde{\Phi}_{I}}} f(x)\right\}=\exp \left[-\tilde{\lambda}_{I}\int_{\mathbb{R}^2}(1-f(x))dx\right].$}, from \eqref{e3} we have
\begin{align}
    &\mathbb{E}\left\{\prod\limits_{BS_i\in\tilde{\Phi}_{I}} \frac{1}{1+T  \left[\frac{r_{2}}{r_i}\right]^\alpha}\right\}=\nonumber\\
&       
\mathbb{E}\left\{\exp \left(-2\pi\tilde{\lambda}_{I} \int_{\upsilon=r_0}^\infty \left[1-\frac{1}{1+T  \left[\frac{r_{2}}{\upsilon}\right]^\alpha}\right]\upsilon d\upsilon\right)\right\}.\label{ew320}
\end{align}
Simplifying the integral expression in \eqref{ew320} is a complicated task since the lower bound in integral is a function of $r_0$ and the expression inside the integral is a function of $r_2$ which are independent of each other. To simplify \eqref{ew320} and find a closed-form expression for $\Gamma_{\textbf{B}}$ coverage probability, at this stage, we provide two approximations as follows.

\subsection{\textbf{Approximation-\rom{1}}} In this approximation, from \eqref{pdfR0} and \eqref{pdfr2}, we have
\begin{equation}
    \mathbb{E}\{r_{2,0}\}=\sqrt{\frac{\tilde{\lambda}_{BS}}{\tilde{\lambda}_{RIS}}}\mathbb{E}\{r_0\}
\end{equation}
Therefore, we simply approximate
\begin{equation}
    r_2\approx \rho r_0,
\label{assu1}
\end{equation} 
where $\rho=\sqrt{\frac{\tilde{\lambda}_{BS}}{\tilde{\lambda}_{RIS}}}$. Consequently we define the following \textit{proposition}.

\begin{proposition}
The SIR coverage probability for path \textbf{B} of a randomly located UE in RIS-assisted mmWave cellular network with \textbf{Approximation \rom{1}} is \begingroup\makeatletter\def\f@size{9.5}\check@mathfonts
    \begin{align}
    Pr\left\{{\Gamma_\textbf{B}}>T\right\}&= \frac{\tilde{\lambda}_{RIS}}{\tilde{\lambda}_{RIS}
    +
    \frac{\tilde{\lambda}_{I}}{\rho^2}\left(T\right)^{\frac{2}{\alpha}} \int_{u=\left(T\right)^{-\frac{2}{\alpha}}}^\infty \frac{\rho^\alpha}{\rho^\alpha+ u^{\frac{\alpha}{2}}} du}\label{er431}
\end{align}
\endgroup
\end{proposition}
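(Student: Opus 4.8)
The plan is to start from the Campbell-theorem expression \eqref{ew320}, in which the only remaining randomness sits in the two distances $r_0$ and $r_2$, and to collapse this to an expectation over a single distance by invoking \textbf{Approximation \rom{1}}. Concretely, \eqref{assu1} lets me write $r_0=r_2/\rho$, so that the lower limit of the inner $\upsilon$-integral (the nearest-BS distance) is re-expressed through $r_2$, while the $r_2$ appearing inside the integrand is kept as is. After this substitution the coverage probability becomes an expectation over $r_2$ alone, to be taken against its density with the converted intensity $\tilde{\lambda}_{RIS}$, i.e. $2\pi\tilde{\lambda}_{RIS}r_2 e^{-\pi\tilde{\lambda}_{RIS}r_2^2}$.

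Second, I would evaluate the inner integral
\[
I=\int_{\upsilon=r_2/\rho}^{\infty}\frac{T(r_2/\upsilon)^{\alpha}}{1+T(r_2/\upsilon)^{\alpha}}\,\upsilon\,d\upsilon
\]
in closed form by a two-step change of variables. Writing $u=(\rho\upsilon/r_2)^2$ turns $\upsilon\,d\upsilon$ into $\tfrac{r_0^2}{2}\,du$ and $(r_2/\upsilon)^\alpha$ into $\rho^\alpha u^{-\alpha/2}$, sending the lower limit to $u=1$; a further rescaling $u\mapsto T^{2/\alpha}u$ then moves the lower limit to $T^{-2/\alpha}$ and pulls the factor $T$ out of the denominator, giving
\[
I=\frac{r_0^2}{2}\,T^{\frac{2}{\alpha}}\!\!\int_{u=T^{-2/\alpha}}^{\infty}\frac{\rho^\alpha}{\rho^\alpha+u^{\alpha/2}}\,du=\frac{r_2^2}{2\rho^2}\,J,
\]
where I set $J\triangleq T^{2/\alpha}\int_{T^{-2/\alpha}}^{\infty}\frac{\rho^\alpha}{\rho^\alpha+u^{\alpha/2}}\,du$ and reinserted $r_0=r_2/\rho$ in the last step. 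This $J$ is precisely the quantity appearing in the denominator of \eqref{er431}.

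Third, I would carry out the remaining expectation. Substituting $I$ back into \eqref{ew320} yields $\Pr\{\Gamma_\textbf{B}>T\}=\mathbb{E}_{r_2}\{\exp(-\pi\tfrac{\tilde{\lambda}_I}{\rho^2}J\,r_2^2)\}$, and integrating against $2\pi\tilde{\lambda}_{RIS}r_2 e^{-\pi\tilde{\lambda}_{RIS}r_2^2}$ is a standard Gaussian-type integral: the substitution $x=r_2^2$ reduces it to $\int_0^\infty \pi\tilde{\lambda}_{RIS}\,e^{-\pi(\tilde{\lambda}_{RIS}+\tilde{\lambda}_I J/\rho^2)x}\,dx$, which equals $\tilde{\lambda}_{RIS}/(\tilde{\lambda}_{RIS}+\tilde{\lambda}_I J/\rho^2)$, i.e. exactly \eqref{er431}.

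The algebraic steps above are routine; the genuinely delicate point is conceptual, namely that the surviving expectation must be taken over $r_2$ with the converted RIS intensity $\tilde{\lambda}_{RIS}$ rather than over $r_0$. It is the approximation $r_0=r_2/\rho$ that transfers the BS-dependent lower limit onto the RIS distance, and this is what produces both the leading $\tilde{\lambda}_{RIS}$ factor and the $1/\rho^2$ weighting of the interference term in \eqref{er431}; integrating over $r_0$ with the BS density would instead place $\tilde{\lambda}_{BS}$ in the wrong positions. Keeping the bookkeeping of the two converted intensities consistent through the single approximation is therefore the main obstacle, and I would state explicitly at which stage the Rayleigh density of $r_2$ under $\tilde{\lambda}_{RIS}$ is invoked to avoid any ambiguity.
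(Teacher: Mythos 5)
Your proposal is correct and follows essentially the same route as the paper's own proof: replace the lower limit $r_0$ by $r_2/\rho$ via Approximation \rom{1}, evaluate the inner interference integral by a change of variables (your two-step substitution is equivalent to the paper's single substitution $u=\rho^2\upsilon^2/(r_2^2 T^{2/\alpha})$), and then average over $r_2$ against the Rayleigh-type density with the converted intensity $\tilde{\lambda}_{RIS}$, yielding \eqref{er431}. Your closing remark about keeping the expectation over $r_2$ with $\tilde{\lambda}_{RIS}$ (rather than over $r_0$ with $\tilde{\lambda}_{BS}$) is exactly the bookkeeping the paper performs in its equations \eqref{e5}--\eqref{e6}.
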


\begin{proof}
 By Taking \eqref{assu1} into account, \eqref{ew320} becomes
\begin{align}
    \mathbb{E}\left\{\exp \left(-2\pi\tilde{\lambda}_{I} \int_{\upsilon=\frac{r_2}{\rho}}^\infty \left[1-\frac{1}{1+T  \left[\frac{r_{2}}{\upsilon}\right]^\alpha}\right]\upsilon d\upsilon\right)\right\}
\end{align}
By employing a change of variable $u= \frac{\rho^2\upsilon^2}{r_{2}^2\left(T\right)^{\frac{2}{\alpha}}}$, we have
\begin{equation}
      \mathbb{E}\left\{\exp \left(-\frac{\pi\tilde{\lambda}_{I}\left(T\right)^{\frac{2}{\alpha}}r_2^2}{\rho^2}
       \int_{u=\left(T\right)^{-\frac{2}{\alpha}}}^\infty \frac{\rho^\alpha}{\rho^\alpha+ u^{\frac{\alpha}{2}}} du\right)\right\}\label{trq}
\end{equation}
By taking the average of \eqref{trq} over $r_2$ with respect to \eqref{pdfr2} and \eqref{we23}, we have
\begin{align}
    \int_{r_{2}=0}^{\infty} f_{r_2}(r_2) \exp &\left(-
    \frac{\pi\tilde{\lambda}_{I}\left(T\right)^{\frac{2}{\alpha}}r_2^2}{\rho^2}\times
       \right.\nonumber\\
       &\hspace{9mm}\left.\int_{u=\left(T\right)^{-\frac{2}{\alpha}}}^\infty \frac{\rho^\alpha}{\rho^\alpha+ u^{\frac{\alpha}{2}}} du\right)\, dr_{2}.\label{e5}
\end{align}
Let consider parameter $\mathcal{J}$ as follows
\begin{align}
     \mathcal{J}=\pi\left(\tilde{\lambda}_{RIS}
    +
    \frac{\tilde{\lambda}_{I}}{\rho^2}\left(T\right)^{\frac{2}{\alpha}} \int_{u=\left(T\right)^{-\frac{2}{\alpha}}}^\infty \frac{\rho^\alpha}{\rho^\alpha+ u^{\frac{\alpha}{2}}} du
    \right).\label{AA}
\end{align}
Then, \eqref{e5} becomes
\begin{align}
    \Pr\left\{{\Gamma_\textbf{B}}>T\right\}=
    2\pi\tilde{\lambda}_{RIS} 
    \int_{r_{2}=0}^{\infty} r_{2} e^{-\mathcal{J} r_{2}^2}\, dr_{2}=  
   \frac{\pi\tilde{\lambda}_{RIS}}{\mathcal{J}}\label{e6}
\end{align}
Eventually by substituting $\mathcal{J}$ in \eqref{AA} into \eqref{e6} we have
\begingroup\makeatletter\def\f@size{9.5}\check@mathfonts
    \begin{align}
    Pr\left\{{\Gamma_\textbf{B}}>T\right\}&= \frac{\tilde{\lambda}_{RIS}}{\tilde{\lambda}_{RIS}
    +
    \frac{\tilde{\lambda}_{I}}{\rho^2}\left(T\right)^{\frac{2}{\alpha}} \int_{u=\left(T\right)^{-\frac{2}{\alpha}}}^\infty \frac{\rho^\alpha}{\rho^\alpha+ u^{\frac{\alpha}{2}}} du}\nonumber
\end{align}
\endgroup
\end{proof}
\subsection{\textbf{Approximation-\rom{2}}} In this approximation we consider a lower bound where $\lambda_{RIS}\gg\lambda_{BS}\rightarrow \upsilon\geqslant r_{2}$ in \eqref{ew320}. Therefore, we state the following \textit{proposition}.
\begin{proposition}
The lower bound SIR coverage probability for path \textbf{B} of a randomly located UE in RIS-assisted mmWave cellular network when $\lambda_{RIS}\gg\lambda_{BS}$ with \textbf{Approximation \rom{2}} is
\begin{align}
    \Pr\left\{{\Gamma'_\textbf{B}}>T\right\}=& \frac{\tilde{\lambda}_{RIS}}{\tilde{\lambda}_{RIS}
    +
    \tilde{\lambda}_{I}\left(T\right)^{\frac{2}{\alpha}} \int_{u=\left(T\right)^{-\frac{2}{\alpha}}}^\infty \frac{1}{1+ u^{\frac{\alpha}{2}}} du}\label{er43}
\end{align}
\end{proposition}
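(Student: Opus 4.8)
The plan is to mirror the proof of Proposition 1 but to replace the $r_2\approx\rho r_0$ approximation by a genuine bounding step applied to the Campbell's-theorem expression \eqref{ew320}. The starting point is therefore the identity
$\Pr\{\Gamma_\textbf{B}>T\}=\mathbb{E}\{\exp(-2\pi\tilde\lambda_I\int_{\upsilon=r_0}^\infty[1-\frac{1}{1+T(r_2/\upsilon)^\alpha}]\upsilon\,d\upsilon)\}$.
The decisive move is to lower the integration limit from $r_0$ to $r_2$. This is legitimate precisely in the regime $\lambda_{RIS}\gg\lambda_{BS}$, where the engaged (nearest) RIS is typically closer to the UE than the serving BS, so $r_2\leqslant r_0$. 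Because the integrand $[1-\frac{1}{1+T(r_2/\upsilon)^\alpha}]\upsilon$ is nonnegative, enlarging the integration domain from $[r_0,\infty)$ to $[r_2,\infty)$ can only increase the integral, hence make the (negative) exponent smaller and the coverage probability no larger than the true one. Establishing this inequality is the conceptual crux, and it is exactly what earns the result the status of a \emph{lower} bound $\Pr\{\Gamma'_\textbf{B}>T\}\leqslant\Pr\{\Gamma_\textbf{B}>T\}$ rather than an uncontrolled approximation.

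Next I would perform the same change of variable used in Proposition 1, adapted to the new lower limit. Setting $u=\upsilon^2/(r_2^2T^{2/\alpha})$ maps $\upsilon=r_2$ to $u=T^{-2/\alpha}$ and converts $T(r_2/\upsilon)^\alpha$ into $u^{-\alpha/2}$, so the bracketed integrand collapses to $\frac{1}{1+u^{\alpha/2}}$. Note that, unlike \eqref{trq}, the factor $\frac{\rho^\alpha}{\rho^\alpha+u^{\alpha/2}}$ is replaced by the simpler $\frac{1}{1+u^{\alpha/2}}$, precisely because no $\rho$-rescaling is invoked here. After the substitution the exponent becomes $-\pi\tilde\lambda_I T^{2/\alpha}r_2^2\int_{T^{-2/\alpha}}^\infty\frac{1}{1+u^{\alpha/2}}\,du$, which is linear in $r_2^2$.

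The final step is to average over $r_2$ using the power-density-converted RIS distance density $f_{r_2}(r_2)=2\pi\tilde\lambda_{RIS}r_2e^{-\pi\tilde\lambda_{RIS}r_2^2}$ from \eqref{pdfr2} and \eqref{we23}, exactly as in \eqref{e5}. Collecting the two $r_2^2$ contributions into a single constant
\[
\mathcal{J}'=\pi\Bigl(\tilde\lambda_{RIS}+\tilde\lambda_I\,T^{\frac{2}{\alpha}}\!\!\int_{T^{-\frac{2}{\alpha}}}^\infty\!\frac{1}{1+u^{\frac{\alpha}{2}}}\,du\Bigr),
\]
the remaining integral $2\pi\tilde\lambda_{RIS}\int_0^\infty r_2e^{-\mathcal{J}'r_2^2}\,dr_2=\pi\tilde\lambda_{RIS}/\mathcal{J}'$ evaluates in closed form just as in \eqref{e6}, delivering \eqref{er43}.

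I expect no difficulty in the calculus: the change of variable and the elementary Gaussian-type integral are identical in spirit to Proposition 1 and are purely mechanical. The only genuinely new and delicate point — the one I would argue most carefully — is the direction of the bound, i.e., verifying that under $\lambda_{RIS}\gg\lambda_{BS}$ the ordering $r_2\leqslant r_0$ holds with high probability so that replacing $r_0$ by $r_2$ yields a true lower bound rather than a heuristic substitution.
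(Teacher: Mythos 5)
Your proof is correct and follows exactly the route the paper intends: the paper omits the proof of Proposition 2 as ``similar to Proposition 1,'' and its Approximation-\rom{2} is precisely your step of replacing the lower integration limit $r_0$ by $r_2$ in \eqref{ew320}, followed by the same change of variable and averaging over $r_2$ with the $\tilde{\lambda}_{RIS}$ density. In fact, your careful justification of the bound's direction (the integrand is nonnegative and $r_2\leqslant r_0$ holds in the $\lambda_{RIS}\gg\lambda_{BS}$ regime, so enlarging the integration domain can only decrease the coverage expression) supplies a detail the paper merely asserts.
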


\begin{proof}
 Since the proof is similar to that in \textit{proposition 1}, we
omit it here. 
\end{proof}

 \section{Discussion on Baseline and RIS-Assisted MmWave Cellular Networks}\label{discus}
 
 The SIR coverage probability of the baseline model in \eqref{q2} and path \textbf{A} in the RIS-assisted model in \eqref{q23} does not depend on $\lambda_{BS}$; and the only deployment parameter to enhance the SIR performance is $N$ which means that with $N\rightarrow \infty$, we have
 \begin{equation}
 N\rightarrow \infty \Rightarrow
 \left\{\begin{array}{l}
      \Pr\left\{{\Gamma_\textbf{o}}>T\right\}\rightarrow 1  \\
      \Pr\left\{{\Gamma_\textbf{A}}>T\right\}\rightarrow 1
 \end{array} \right. .
 \end{equation}
 However the lower bound SIR probability in \eqref{er43} in \textbf{Approximation \rom{2}} shows that the SIR performance for path \textbf{B} depends on deployment parameters of not only $N$ but also $\lambda_{BS}$, $\lambda_{RIS}$, and $M$. In other words, from \eqref{EE1} and \eqref{we23}, \eqref{er43} can be re-expressed as
 \begin{align}
     &\Pr\left\{{\Gamma'_\textbf{B}}>T\right\}=\nonumber\\
    &\hspace{1mm} \frac{{\lambda}_{RIS} M^{\frac{4}{\alpha}} \mathcal{F}_1\left(\lambda_{BS},\lambda_{RIS},\alpha,\beta\right)
     }
     {
     {\lambda}_{RIS} M^{\frac{4}{\alpha}} \mathcal{F}_1\left(\lambda_{BS},\lambda_{RIS},\alpha,\beta\right)
      + \sqrt{\frac{2}{N}}{\lambda}_{BS}\mathcal{F}_2(T,\alpha)
     },\label{52}
 \end{align}
 where
 \begin{align}\left\{
 \begin{array}{l}
     \mathcal{F}_1\left(\lambda_{BS},\lambda_{RIS},\alpha,\beta\right)=\left[\frac{\beta}{\mu}\right]^{\frac{2}{\alpha}}
\Gamma\left(\frac{2}{\alpha}+1\right)\mathcal{F}_R\left(\lambda_{RIS},\lambda_{BS}\right) \vspace{1mm} \\
      \mathcal{F}_2(T,\alpha)=\left(T\right)^{\frac{2}{\alpha}} \int_{u=\left(T\right)^{-\frac{2}{\alpha}}}^\infty \frac{1}{1+ u^{\frac{\alpha}{2}}} du 
 \end{array}
 \right. .\nonumber
 \end{align}
 
 Based on \textit{remark 1}, $\mathcal{F}_R$ and consequently $\mathcal{F}_1$ are increasing functions of $\lambda_{RIS}$. Therefore, we have
 \begin{align}
 \left.
     \begin{array}{l}
        \hspace{1mm} \lambda_{RIS}\rightarrow\infty \\
      or \hspace{2mm} M\rightarrow \infty \\
      or  \hspace{2mm} N\rightarrow \infty
     \end{array}\right\}\Rightarrow \Pr\left\{{\Gamma'_\textbf{B}}>T\right\}\rightarrow 1.
 \end{align}
Nevertheless, in practice, there are implementation issues which might prevent these parameters to become very large. However, there is a great deal of flexibility to select a proper parameter for SIR enhancement. For instance, a lower complex antenna array, i.e., smaller $N$, can be deployed at the BS and either higher RIS density or larger number of reflectors $M$ can be taken into account to provide a desired SIR gain.
On the other hand, from \eqref{52} when $\lambda_{BS}$ increases, although the $\mathcal{F}_1$ in the numerator increases, the expression in the denominator increases faster than the numerator because of additive expression of $\sqrt{\frac{2}{N}}{\lambda}_{BS}\mathcal{F}_2(T,\alpha)$. In other words, the co-channel interference increases faster than the reflected power from the RIS when the active BS density increases, i.e., 
\begin{align}
     \Pr\left\{{\Gamma'_\textbf{B}}>T\right\}\varpropto \frac{C_1\lambda_{RIS}}{C_1\lambda_{RIS}+C_2\lambda_{BS}},
     \end{align}
     where $C_1$ and $C_2$ are assumed to be constant for specific moments.
     Therefore, it is desirable to decrease the number of active BSs ($\lambda_{BS}\neq 0$) and deploy more passive RISs.




\section{Numerical Results}
In this section, the performance of the RIS-assisted is shown by simulation results. The evaluations include: A) the SIR coverage probability comparison for both paths \textbf{A} and \textbf{B} in the RIS-assisted model, B) the SIR coverage probability comparison of the RIS-assisted and the baseline models, and C) the impact of $\lambda_{RIS}$ and $\lambda_{BS}$ on SIR coverage probability.


We use MATLAB and the parameters used in the simulations are given in TABLE \ref{tab1}, unless otherwise specified.
\begin{table}[h]
\begin{center}
\centering
\captionsetup{width=1\linewidth}
\caption{System numerical parameters.
\label{tab1}}
\label{tab:par}
\begin{tabular}{lc} \hline
\hline
System parameters & Corresponding value  \\ 
 \hline
 \hline
BS radiated power (downlink) , $P_s$  & 2 W\\
 \hline
BS antenna array, $N$ & $16$ antennas \\
\hline
RIS elements, $M$ & 100 reflectors\\
\hline
Attenuation power ratio at RIS-elements, $\beta$ & 0.9\\
\hline
Channel bandwidth, $W$  & $100$ MHz\\
\hline
\hline
\end{tabular}
\end{center}
\end{table}

\subsection{SIR Coverage Evaluation in RIS-assisted Model}
\begin{figure}[t]
\centering
\captionsetup{width=1\linewidth}
  \includegraphics[width=9cm, height=6cm]{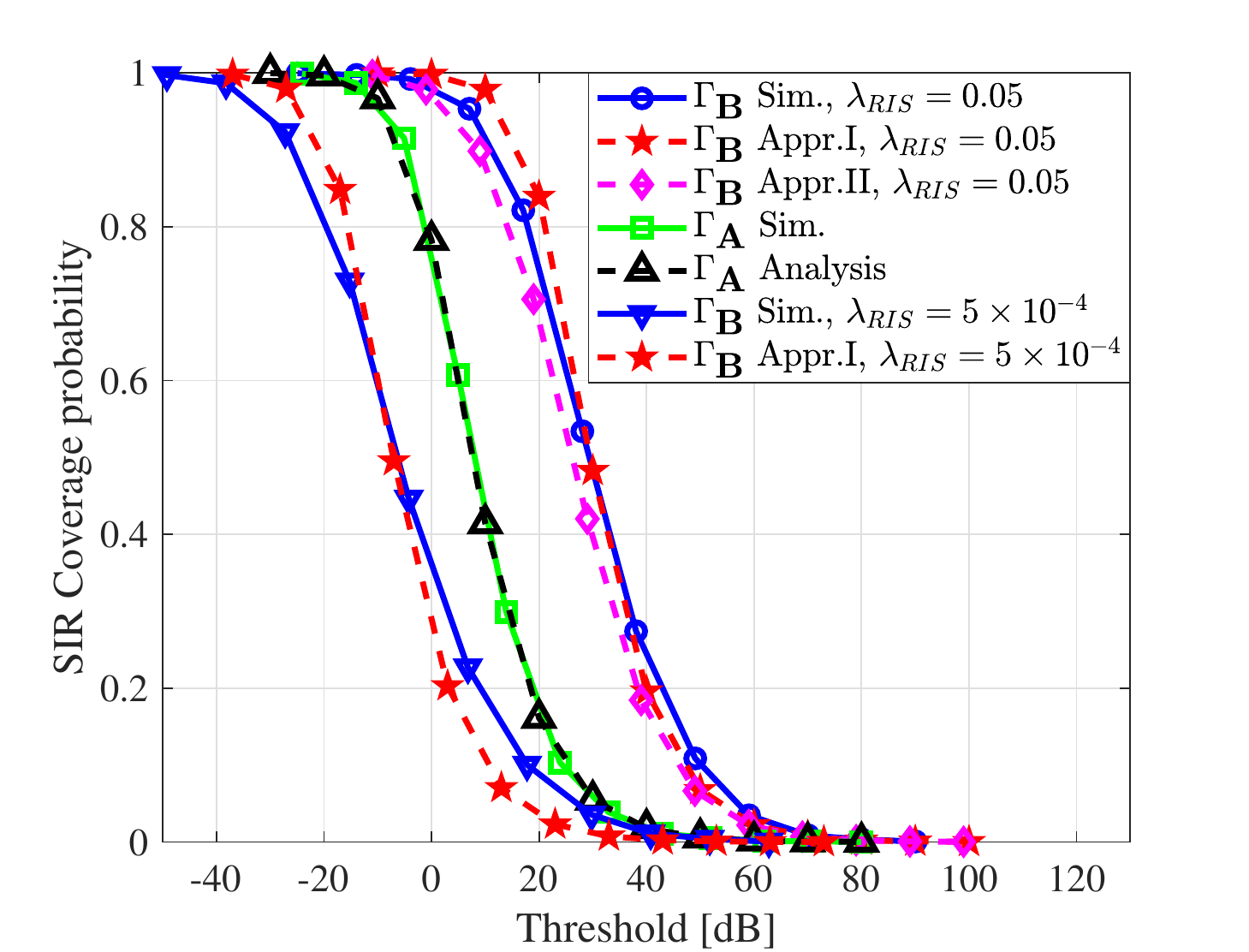}%
  \caption{Coverage evaluation of the RIS-assisted model through paths \textbf{A} and \textbf{B}. RIS densities are in $\frac{RIS}{m^2}$, $\lambda_{BS}=2.5\times 10^{-5}\frac{BS}{m^2}$, and $M=100$. }
  \label{sim1}
\end{figure}
Since the selection diversity of the two received signals is taken into account in the proposed RIS-assisted model, the UE is able to select the strongest signal between those received signals through paths \textbf{A} and \textbf{B}. Fig. \ref{sim1} compares the SIR coverage probabilities for both of paths $\textbf{A}$ and \textbf{B}. Based on \eqref{q23}, \eqref{er431} and \eqref{er43}, $\Gamma_\textbf{A}$ only depends on $N$ while $\Gamma_\textbf{B}$ depends on $\lambda_{RIS}$ as well. Therefore, it is shown that when $\lambda_{RIS}$ increases from $5\times 10^{-4}$ to $0.05$ $\frac{RIS}{m^2}$, $\Gamma_\textbf{B}$ is improved. Moreover, when $\lambda_{RIS}$ becomes larger, i.e., $\lambda_{RIS}\gg\lambda_{BS}$, $\Gamma_\textbf{B}$ outperforms $\Gamma_\textbf{A}$.

\subsection{SIR Comparison of RIS-assisted and Baseline Models}
Fig. \ref{sim2} compares the the coverage performance of the proposed model with the baseline model. It shows that there is a minimum coverage probability when there is no RIS-assisted path between the BS and the UE, i.e., small $\lambda_{RIS}$ results in $\Gamma_\textbf{s}=\Gamma_\textbf{A}$ which is independent of $\lambda_{RIS}$, e.g., $\lambda_{RIS}=10^{-5}$ $\frac{RIS}{m^2}$. This minimum coverage probability is slightly less than that of the baseline model since we have two beams in the RIS assisted model while there is only one beam in the baseline model with the same BS structure. In other words, with fixed $N$ active antennas at the BSs, the beam in the baseline model is narrower that that of the RIS-assisted model causes less co-channel interference. However, when $\lambda_{RIS}$ increases, the coverage probability in the RIS-assisted is improved. In addition, bigger RISs, i.e., a larger $M$, performs better than small RISs as it increases the reflected power from the engaged RIS.
\begin{figure}[t]
\centering
\captionsetup{width=1\linewidth}
  \includegraphics[width=9cm, height=6cm]{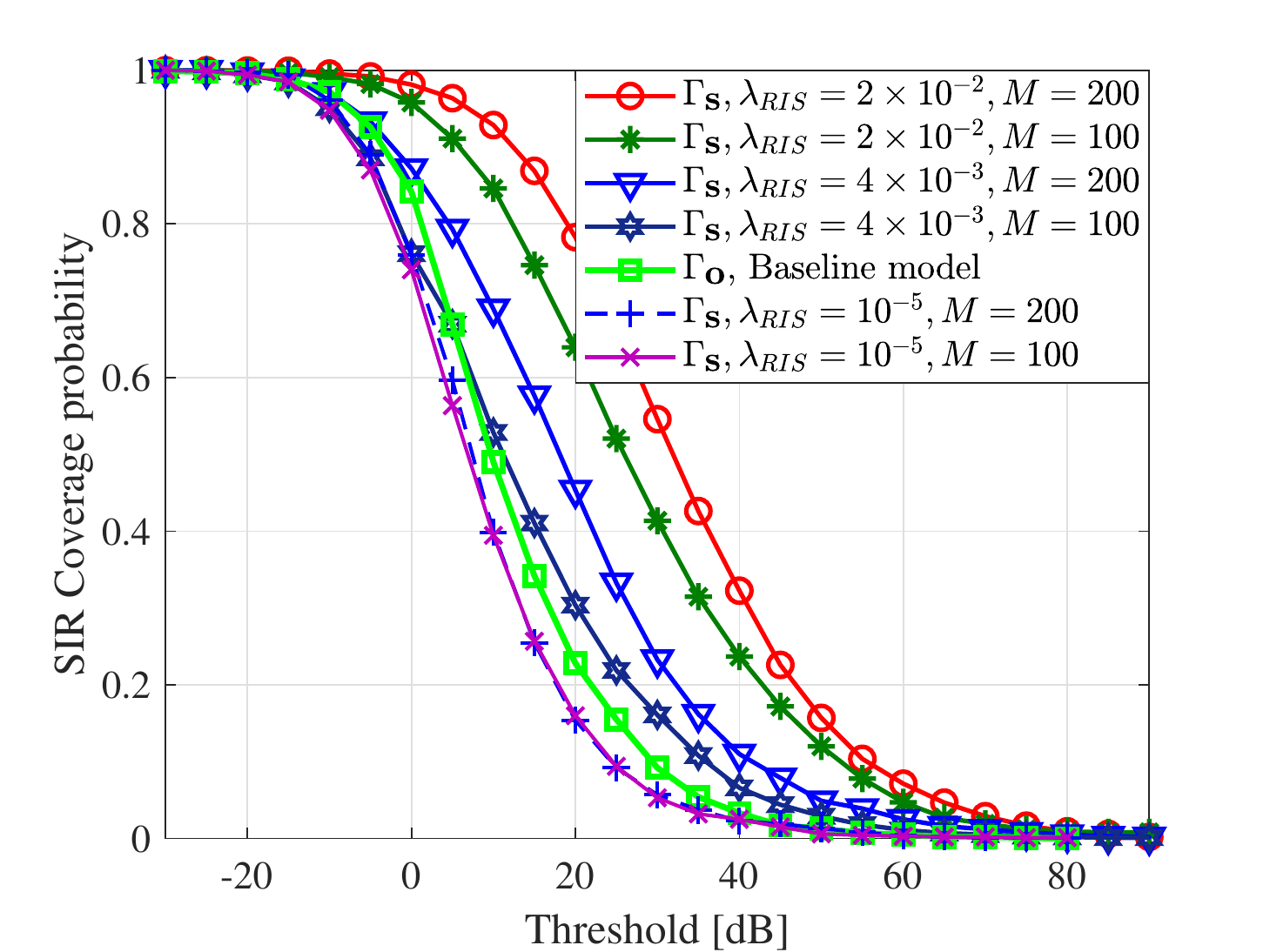}%
  \caption{Coverage comparison between the RIS-assisted and the baseline models. RIS densities are in $\frac{RIS}{m^2}$. $\lambda_{BS}=2.5\times 10^{-5}\frac{BS}{m^2}$. }
  \label{sim2}
\end{figure}
\subsection{Impact of $\lambda_{RIS}$ and $\lambda_{BS}$ on SIR Coverage Performance}
\begin{figure}[t]
\centering
\captionsetup{width=1\linewidth}
  \includegraphics[width=9cm, height=6cm]{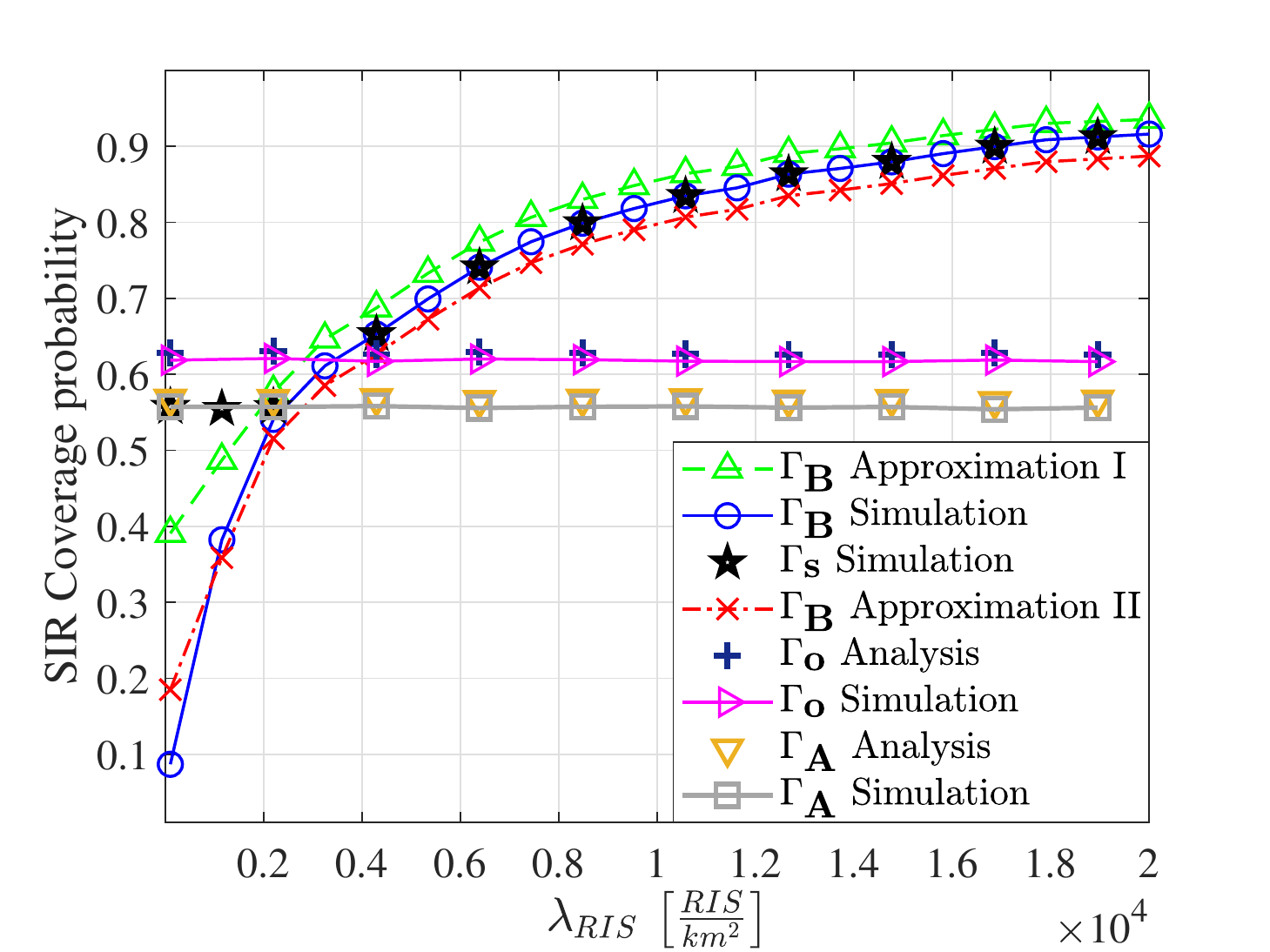}%
  \caption{Impact of $\lambda_{RIS}$ on SIR coverage performance. $\lambda_{BS}=2.5\times 10^{-5}\frac{BS}{m^2}$, $M=100$ and $T=5$ dB. }
  \label{3sim}
\end{figure}
Fig. \ref{3sim} shows the impact of $\lambda_{RIS}$ on the proposed RIS-assisted model. As discussed in section \ref{discus}, when $\lambda_{RIS}$ increases, the SIR coverage probability of $\Gamma_\textbf{B}$ approaches 1. 
Moreover, we can see that \textbf{Approximation \rom{1}} is supported by the simulation results in high $\lambda_{RIS}$ because when $\lambda_{RIS}\rightarrow \infty$, the distance of $r_2$ and its variations becomes smaller; therefore, $r_2\approx \rho r_0$ in \eqref{assu1} becomes more realistic. 
In addition, the SIR performances for the baseline model and the path \textbf{A} in the RIS-assisted model are fixed and independent from the $\lambda_{RIS}$ variations. $\Gamma_\textbf{o}$ has better performance compared to $\Gamma_{A}$ due to having a narrower beamwidth since it uses all $N$ elements to create one beam while in the proposed model there are two wider beams with the same BS structure.
\begin{figure}[t]
\centering
\captionsetup{width=1\linewidth}
  \includegraphics[width=9cm, height=6cm]{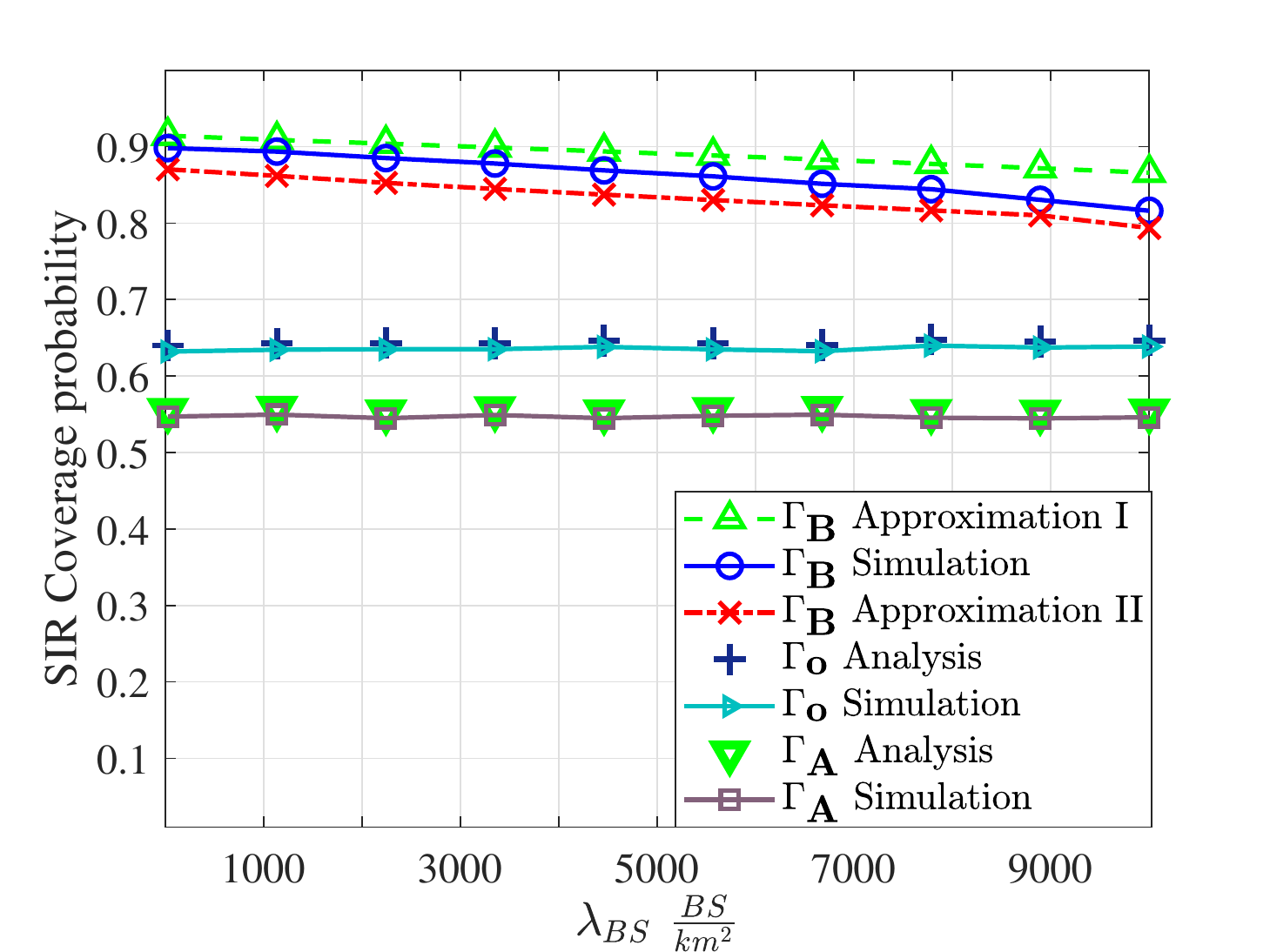}%
  \caption{Impact of $\lambda_{BS}$ on SIR coverage performance. $\lambda_{RIS}=2\times 10^{-2}\frac{RIS}{m^2}$ and $T=5$ dB. }
  \label{4sim}
\end{figure}

Fig. \ref{4sim} depicts the SIR coverage performance of the RIS-assisted and the baseline model with respect to $\lambda_{BS}$. As $\lambda_{BS}$ increases, $\Gamma_\textbf{o}$ and $\Gamma_\textbf{A}$ remain unchanged while $\Gamma_\textbf{B}$ slightly decreases. It shows that although higher $\lambda_{BS}$ leads to a higher reflected power from the RIS, its impact on the co-channel interference power is larger. In other words, the speed of increasing of co-channel interference power is faster than that of reflected power from the RIS.    

\section{Conclusions}
In this study, we proposed a new RIS-assisted mmWave cellular network where a message is transmitted by a BS towards a desired UE though two LoS and NLoS paths. The NLoS path passes through an RIS and then, reflected towards the UE. Discrete time delay values corresponding the phase-shifts at each RIS-reflector was elaborated and the peak reflection power at the RIS was assessed.
Since the UE utilizes selection diversity technique to pick the strongest signal received through the two paths, we analysed the SIR coverage performance of both paths with major emphasis on RIS and BS densities and compared its performance with a baseline model. Two closed-form approximations were derived analytically for the SIR coverage probability of the RIS-assisted path. It was shown that the SIR coverage probability of the RIS-assisted path depends on not only $N$ but also $\lambda_{RIS}$, $\lambda_{BS}$, and $M$ which provides a great deal of flexibility to obtain a desired SIR gain. 

\section{Acknowledgement}
This work was supported by Australian Research Council (ARC) Discovery 2020 Funding, under grant number DP200100391.

\begin{appendices}
\section{Derivation of $\Pr\left[\Gamma_\textbf{o}>T\right]$}
\label{AP0}

Similar to the analysis in \cite{REF}, the SIR coverage probability in \eqref{eq213} becomes
\begin{align}
    &\Pr\left[\Gamma_\textbf{o}>T\right]=
    \mathop{\mathbb{E}}\limits_{\substack{{g}_0\\{g}_i\\r_i\\r_0}}\left\{\Pr\left({g}_0> Tr_0^{\alpha}\sum\limits_{\substack{BS_i \in {\Phi}_{I}\\i\neq 0}} {g}_i r_i^{-\alpha}\right)\right\}=\nonumber\\
%
    &
    \mathop{\mathbb{E}}\limits_{\substack{{g}_i\\r_i\\r_0}}\left\{\exp\left(-\mu Tr_0^{\alpha}\sum\limits_{\substack{BS_i \in {\Phi}_{I}\\i\neq 0}} g_i r_i^{-\alpha}\right)\right\}=\nonumber\\
    &
    \mathop{\mathbb{E}}\limits_{\substack{r_i\\r_0}}\left\{ \prod\limits_{\substack{BS_i \in {\Phi}_{I}\\i\neq 0}} \mathop{\mathbb{E}}\limits_{\substack{{g}_i}}\left\{\exp \left( -\mu {g}_iT\left[\frac{r_0}{r_i}\right]^\alpha\right)\right\}\right\}\stackrel{\mathbb{E}\left\{e^{-{g}_ix}\right\}=\mu\left[\mu+x\right]^{-1}}{=}\nonumber\\
    &
    \mathop{\mathbb{E}}\limits_{\substack{r_0}}\left\{ 
    \mathop{\mathbb{E}}\limits_{\substack{r_i}}\left\{\prod\limits_{\substack{BS_i \in {\Phi}_{I}\\i\neq 0}} \mu\left(\mu+\mu T\left[\frac{r_0}{r_i}\right]^\alpha\right)^{-1}\right\}\right\}\stackrel{\text{Campbell's theorem}}{=}\nonumber\\
    &
    \mathop{\mathbb{E}}\limits_{\substack{r_0}}\left\{ 
    \exp \left(-2\pi{\lambda}_{I}
    \int_{\upsilon=r_0}^\infty \left[ 1-\left(1+T\left[\frac{r_0}{\upsilon}\right]^\alpha\right)^{-1}\right]\upsilon d\upsilon\right) \right\}. 
    \label{eq218}
\end{align}
With changing the variable as $
    u= \frac{\upsilon^2}{r_{0}^2\left(T\right)^{\frac{2}{\alpha}}}
$ and averaging over $r_0$,
we have
\begingroup\makeatletter\def\f@size{10}\check@mathfonts
\begin{align}
    &\Pr\left[\Gamma_\textbf{o}>T\right]=\nonumber\\
    &
    \int_{r_0=0}^\infty f_{r_o}(r_o)\exp{\left(-\pi{\lambda}_{I}r_{0}^2\left(T\right)^{\frac{2}{\alpha}}
    \int_{u=\left(T\right)^{-\frac{2}{\alpha}}}^\infty \frac{1}{1+ u^{\frac{\alpha}{2}}} du\right)}d \, r_0.
    \label{eq220}
\end{align}
\endgroup
Eventually, the closed form of \eqref{eq220} becomes
\begin{align}
        \Pr\left[\Gamma_\textbf{o}>T\right]&=
    \frac{{\lambda}_{BS}}{\left({\lambda}_{BS}+{\lambda}_{I}T^{\frac{2}{\alpha}}
         \int_{T^{-\frac{2}{\alpha}}}^\infty \frac{1}{1+u^{\frac{\alpha}{2}}}du\right)}\\
         &\stackrel{\lambda_I=\frac{\lambda_{BS}}{\sqrt{N}}}{=}
         \frac{1}{\left(1+\frac{1}{\sqrt{N}}T^{\frac{2}{\alpha}}
         \int_{T^{-\frac{2}{\alpha}}}^\infty \frac{1}{1+u^{\frac{\alpha}{2}}}du\right)}.\nonumber 
\end{align}

\section{Derivation of $f_{r_2}(r|r_2<r_0)$}
Since, the BSs and RISs are independently distributed in the area and utilizing Bayes theorem,
the CDF of being $r_2<r_0$ is given by
\begin{equation}
    F_{r_2}\left(R|r_2<r_0\right)=\frac{\int_{r_2=0}^R 
    \Pr(r_0>r_2|r_2) f_{r_2}({r_2})  d r_2}{F_{r_2}(r_0)}.
    \label{ap11}
\end{equation}
From \eqref{pdfR0} and \eqref{pdfr2}, the numerator of \eqref{ap11} becomes
\begin{align}
&
\int_{r_2=0}^{R} \left[e^{-\pi\lambda_{BS}r_2^2}\right]f_{r_2}(r_2) d r_2=\nonumber \\
&
\hspace{2cm}
\frac{\lambda_{RIS}}{\lambda_{BS}+\lambda_{RIS}}\left[1-e^{-\pi(\lambda_{RIS}+\lambda_{BS})R^2}\right].
\label{lt32}
\end{align}
Then, the denominator of \eqref{ap11} becomes 
\begin{align}
   F_{r_2}(r_0)&= \Pr \left[r_{2}<r_0\right]\nonumber\\
         &= 
         \int_{r_0=0}^\infty \left[\int_{r_{2}=0}^{r_0} \,f_{r_{2}}(r_{2}) \,dr_{2}\right] f_{r_0}(r_0) dr_0\nonumber\\
         &= \frac{\lambda_{RIS}}{\lambda_{BS}+\lambda_{RIS}}.\label{lr32}
\end{align}
Eventually, by substituting \eqref{lt32} and \eqref{lr32} into \eqref{ap11}, we have \begin{equation}
    F_{r_2}\left(R|r_2<r_0\right)=1-e^{-\pi(\lambda_{RIS}+\lambda_{BS})R^2}.
    \label{ap13}
\end{equation}
Consequently, we have
\begin{align}
f_{r_2}(r|r_2<r_0)=2\pi(\lambda_{RIS}+\lambda_{BS}) r e^{-\pi(\lambda_{RIS}+\lambda_{BS})r^2}. \nonumber
\end{align}
The analysis is complete.
\label{AP00}

\section{Power-Density Conversion}
Let $r=\sqrt{x^2+y^2}$ and $x,y\in \Phi$ (homogeneous PPP) with intensity of $\lambda$ in the 2-dimensional Euclidean plane. Similarly, let also consider a new r.v. $\mathcal{R}=\sqrt{\mathcal{X}^2+\mathcal{Y}^2}$ where $\mathcal{X},\mathcal{Y}\in \tilde{\Phi}$ with intensity of $\tilde{\lambda}$.
Suppose constant values of $P$ and $\alpha$ where we have
\begin{align}
    \mathcal{R}^{-\alpha}=P r^{-\alpha}&=\left[\left(P\right)^{-\frac{1}{\alpha}} r\right]^{-\alpha}\nonumber\\
    &=\left[\sqrt{
      \underbrace{\left(P\right)^{-\frac{2}{\alpha}} x^2}_{\mathcal{X}}+\underbrace{\left(P\right)^{-\frac{2}{\alpha}} y^2}_{\mathcal{Y}}  }\right]^{-\alpha}.
\end{align}
Therefore, we have
\begin{equation}
    \left[\begin{array}{c}
         \mathcal{X}  \\
         \mathcal{Y}
    \end{array}\right]=
    \overbrace{\left[\begin{array}{cc}
         P^{-\frac{1}{\alpha}} & 0 \\
         0 & P^{-\frac{1}{\alpha}} 
    \end{array}\right]}^{A}
     \left[\begin{array}{c}
         {x}  \\
         {y}
    \end{array}\right].
\end{equation}
Consequently, profiting from mapping theorem, we can conclude $\Phi \rightarrow \tilde{\Phi}$ where 
\begin{equation}
    \tilde{\lambda}=\text{det}[A^{-1}]\lambda=P^{\frac{2}{\alpha}}\lambda.
\end{equation}
\label{AP1}

\end{appendices}

\bibliographystyle{ieeetran.bst}
\bibliography{ref}

\end{document}